\algnewcommand\algorithmicforeach{\textbf{for each}}
\newtheorem{notation}{Notation}
\title{Null Messages, Information and Coordination}
\author{Ra\"issa Nataf}{Technion, Israel} {raissa.nataf@campus.technion.ac.il}{}{}%TODO mandatory, please use full name; only 1 author per \author macro; first two parameters are mandatory, other parameters can be empty. Please provide at least the name of the affiliation and the country. The full address is optional
\author{Guy Goren}{Protocol Labs, Israel} {guy.goren@protocol.ai}{}{}
\author{Yoram Moses}{Technion, Israel} {moses@ee.technion.ac.il}{}{ Yoram Moses is the Israel Pollak academic chair at the Technion. Both his work and that of Ra\"issa Nataf were supported in part by the Israel Science Foundation under grant 2061/19.}
\authorrunning{R.Nataf, G.Goren, Y.Moses}
\keywords{null messages, fault tolerance, coordination, information flow, knowledge analysis.}
\newcommand{\Btagnullfree}{B'_{\not{n}}}
\newcommand{\CG}{\mathsf{CG}}
\newcommand{\nG}{\mathsf{nG}}
\newcommand{\ba}{\underline{a}} 
\newcommand{\Bnullfree}{B_{\not{n}}}
\newcommand{\Eactual}{E_{\sf a}}
\newcommand{\Enull}{E_{\sf n}}
\newcommand{\Eloc}{E_{\sf l}}
\newcommand{\runfact}{\psi}
\newcommand{\nicerun}{\psi_{nice}}
\newcommand{\node}[1]{\langle#1\rangle}
\newcommand{\NbM}{\mbox{$\mathsf{NbM}$}}
\newcommand{\RbM}{\mbox{$\mathsf{RbM}$}}
\newcommand{\chan}[1]{\mathsf{ch}_{#1}}
\newcommand{\rnice}{\hat{r}}
\newcommand{\comm}[1]{}
\newcommand{\Proc}{\mathbb{P}}
\newcommand{\nodes}{\mathbb{V}}
\newcommand{\faulty}{\mathbb{F}}
\newcommand{\modelf}{\gamma^f}
\newcommand{\defemph}[1]{\textbf{\textit{#1}}}
\newcommand{\act}{\alpha}
\newcommand{\RP}{R_Q}
\newcommand{\factOR}{v_s=1}
\newcommand{\fF}{f/\mathsf{failed}}
\newcommand{\OR}{{\sf O-R}}
\newcommand{\ORi}{\mbox{$\mathsf{OR}$}}
\newcommand{\IT}{\mbox{$\mathsf{IT}$}}
\newcommand{\Nat}{{\mathbb{N}}}
\newcommand{\sat}{\vDash}
\newcommand{\angles}[1]{\ensuremath{\node{{#1}}}}
\newcommand{\emc}{\rightsquigarrow}
\begin{document}

\maketitle

%TODO mandatory: add short abstract of the document
\begin{abstract}
This paper investigates the role that null messages play in synchronous systems with and without failures, and provides necessary and sufficient conditions on the structure of protocols for information transfer and coordination there. 
We start by introducing  a new and more refined definition of null messages. A generalization of message chains that allow these null messages is provided,  and is shown to be necessary and sufficient for information transfer in reliable systems.   Coping with crash failures requires a much richer structure, since  not receiving a message may be the result of the sender's failure. We introduce a class of communication patterns called {\em resilient message blocks}, which impose a stricter condition on protocols than the {\em silent choirs} of Goren and Moses (2020). Such blocks are shown to be necessary for information transfer in crash-prone systems. Moreover, they are sufficient in several cases of interest, in which silent choirs are not. Finally, a particular combination of resilient message blocks is shown to be necessary and sufficient for solving the Ordered Response coordination problem.
\end{abstract}

\section{Introduction}
Communication and coordination in distributed systems depend crucially on properties of the model at hand. 
%Thus, for example, the only way to transmit information in asynchronous systems is via message chains. 
In synchronous systems in which processes have clocks and message transmission times are bounded, sending explicit messages is not the only way to transmit information. 
%it is sometimes possible to transmit information without explicit communication. 
%enable richer patterns of communication. 
Suppose that a sender~$s$ needs to transmit  its (binary) initial value~$v_s$ to a destination process~$d$, in a system in which messages are delivered in~1 time step.  
If~$s$ follows a protocol by which it sends~$d$ a message at time~0 in case $v_s=0$ and does not send anything if $v_s=1$, then~$d$ can learn that $v_s=1$ at time 1 without receiving any messages. Lamport called this 
``{\it sending  a message by not sending a message}'' in~\cite{lamp}, and he referred to not sending a message over a communication channel at a given time~$t$ as sending a ``{\em null message}.''   
%The fact that~$s$ does not send a message when $v_s=1$ is considered to be a {\em null message}. 
In this paper we provide a new and more precise definition of null messages, and investigate the general role that null messages play in information transfer and in  coordination in synchronous systems with and without failures. 

 In particular, our results extend and generalize those of Goren and Moses in~\cite{silence}, who were the first to explicitly consider how silence can be used in systems with crash failures.

The possibility of failures makes information transfer a rather subtle issue. Denote by~$f$ an {\it a priori} upper bound on the number of failures per execution.
Consider the following protocol, which we denote $P_1$: In the first round process $s$ sends a message to $p$ reporting whether its initial value~$v_s$ is $1$ or $0$. In round~2, if process~$p$ has received a message stating that $v_s=1$ it keeps silent, and it sends an actual message to $d$ otherwise.
A run $r_1$ of $P_1$ in which $v_s=1$ and no process fails is depicted in \Cref{figure1}. (In our figures, solid arrows represent actual messages and dashed arrows represent null messages.) Note that if $f=0$, then process~$p$   receives the first round message from~$s$ in every run of~$P_1$.  Consequently, if $v_s=1$ then following the second round, $d$ learns that $v_s=1$ since it did not hear from~$p$. 
Now assume that one process may crash ($f=1$). In this case $r_1$, where  none fails, is a legal execution of~$P_1$ but~$d$ is not informed that $v_s=1$ in~$r_1$. This is because~$d$ cannot distinguish~$r_1$ %(depicted in \Cref{figure1}) 
from a run in which $v_s=0$ and~$p$ crashes before sending its message to~$d$.

\begin{figure}[!ht]
\begin{minipage}[b]{.4\textwidth}
\centering
\includegraphics[width=1\textwidth]
{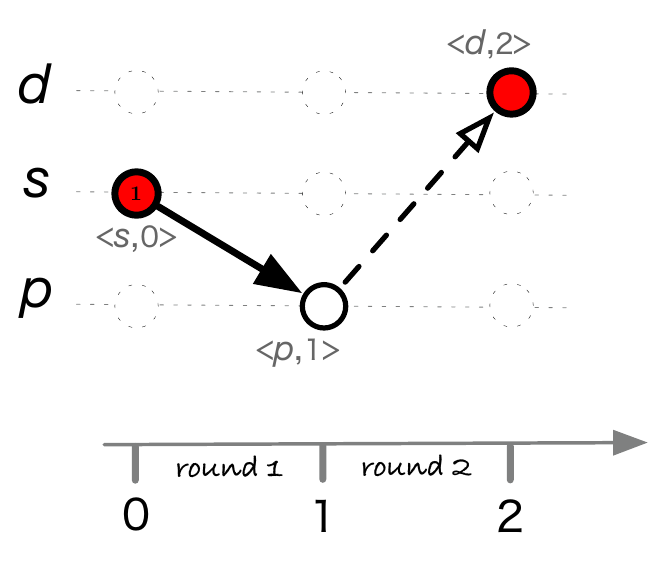}
\caption{The run~$r_1$ of~$P_1$ in which~$d$ is informed that $v_s=1$ when $f=0$ but {\bf not} when $f=1$.}
\label{figure1}
\end{minipage}
\hfill
\begin{minipage}[b]{.5\textwidth}
\centering
\includegraphics[width=.8\textwidth]
{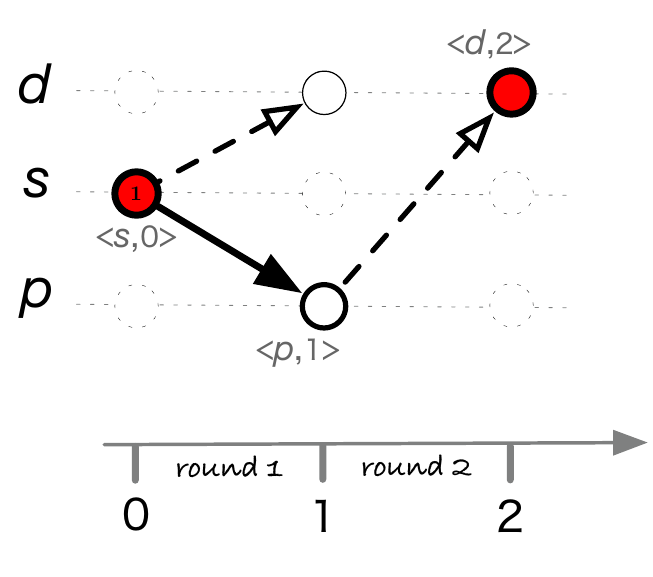}
\caption{The run~$r_2$ of~$P_2$ in which a silent choir informs~$d$ that $v_s=1$ when $f=1$.
\\$~$}
\label{figure2}
\end{minipage}
\end{figure}
Consider now  a protocol $P_2$ that differs from~$P_1$ only in that according to~$P_2$ process~$s$ should send~$d$ a message at time~0 if and only if $v_s=0$. (Process~$s$ remains silent if $v_s=1$.) \Cref{figure2} depicts the run $r_2$ of $P_2$ where $v_s=1$ and no failures occur. Observe that in case $f=1$, process~$d$ is informed in $r_2$ that $v_s=1$. 
As before, $d$ cannot observe in~$r_2$ whether~$p$ has crashed. However, since $f=1$, at most one of the missing messages to~$d$ can be explained by a process crash. The other missing message must be caused by the fact that $v_s=1$. Note that exactly the same messages are sent in $r_1$ and $r_2$. Process~$d$ obtains different information in the two cases because the protocols are different: 
% In~$P_2$, if~$s$ is correct and $v_s=0$ then it does send~$d$ a first-round message. Consequently, not receiving such a message under~$P_2$ (in~$r_2$) provides~$d$ with genuine information.In contrast, no message is ever sent from $s$ to $d$ in the runs of $P_1$, and so no information is conveyed to~$d$ in~$r_1$ from the fact that it does not hear from~$s$. 
In particular, $s$ keeps silent toward $d$ only under certain conditions of interest according to $P_2$ while it always keeps silent according to $P_1$. This is what provides $d$ genuine information.

The above discussion motivates a new and more refined definition of null messages. While \cite{lamp} considers not receiving a message as the receipt of a null message, we define a null message to be sent by a process~$i$ to its neighbor~$j$ at time~$t$ in a given execution if process~$i$ does not send an actual message at time~$t$, and there is at least one execution of the protocol in which~$i$ {\em does} send~$j$ an actual message at time~$t$. (A formal definition appears in~\cref{sec:model}.) With such a definition, a null message is guaranteed to carry some nontrivial information.

% \begin{wrapfigure}{L}{0.4\textwidth}
% \centering
% \includegraphics[width=0.4\textwidth]{Null_chain.pdf}
% \caption{An execution in which an enhanced message chain informs~$d$ that $v_s=1$ when $f=0$ but {\bf not} when $f=1$.}
% \label{figure1}
% \end{wrapfigure}
%An analysis of information transfer in synchronous systems with crash failures 
%was initiated by  Goren and Moses in~\cite{silence}. They showed 
Goren and Moses showed in~\cite{silence} that information can be transmitted in silence even when crashes may occur. Their {\em Silent Choir\,} Theorem states a necessary condition for~$d$ to  learn the initial value of~$s$ in a crash-prone system without a message chain from~$s$.  For failure-free executions, their necessary condition becomes the following: 
If $d$ knows the value of~$v_s$ at time~$m>0$ without an actual message chain (i.e., a message chain exclusively composed of actual messages) from~$s$ having reached it, there are at least $f+1$ processes that receive an actual message chain from~$s$ by time~$m-1$ and send no message to $d$ at time $m-1$. 
These processes are called a {\em silent choir}. In~$r_2$, process~$d$ learns the value of~$s$ at time~2, and we can see in \Cref{figure2} that the set $\{s,p\}$ constitutes a silent choir.  
However, as we shall see, while being necessary for information transfer in crash-prone synchronous systems, the  Silent Choir Theorem's conditions are not sufficient, even for failure-free executions.
% \hfill
% \begin{minipage}[b]{.5\textwidth}
% \centering
% \includegraphics[width=.8\textwidth]
% {}
% \caption{A failure-free execution where process $d$ learns that $v_s=1$ based on null messages assuming  $\,\le f=1$ crashes. }
% \label{figure3}
% \end{minipage}
%\end{figure}
%==============

Consider again the run~$r_1$ of~\Cref{figure1} and assume $f=1$. The set $\{s,p\}$ forms a silent choir, i.e., the conditions of the Silent Choir Theorem hold. However, as we described above, $d$ does not learn that $v_s=1$ in~$r_1$. 
%making $p$ fail before time $1$ is enough to prevent $d$ from distinguishing the failure free $r_1$ of $P_1$ from a run in which $v_s=0$. 
Process~$s$, who belongs to the silent choir here, {\em never} sends $d$ a message under~$P_1$, and so its silence does not form a null message according to our new definition.
Naturally, strengthening the Silent Choir condition by requiring  that the processes of the silent choir  actually send null messages at time $m-1$, i.e., one time unit before $d$ gains the knowledge that $v_s=1$, could make it sufficient. However, the condition would then \textbf{not} be necessary. For example, as depicted in \Cref{figure2}, $s$ does not send~$d$ a null message in the second round of~$r_2$ (which is at time~$m-1$ in the language of the Silent Choir Theorem).  
%\raissa{As depicted in \Cref{figure2} (where information {\em is} conveyed in the failure-free run $r_2$ of $P_2$), the processes of the silent choir do not all send null messages at time $2-1=1$. Indeed, process $s$ sends $d$ a null message at time $0$.} In addition, 
Notice that, in addition, members of the silent choir do not necessarily send   null messages to $d$. Indeed, they do not even need to be neighbors of~$d$.

Beyond the fundamental value of studying null messages to understand the differences between synchronous and asynchronous models of distributed systems, judicious use of null messages can lead to considerable savings. 
For a concrete example, consider a network structured as depicted in \Cref{fig:ONet}, in which there are different costs for sending over different channels. This can arise, for example, from  the three intermediate processes residing at the same site or belonging to the same organization as~$s$, while~$d$ is across the ocean, or just connected via expensive channels. Assume, in addition, that $d$ needs to know the value of~$v_s$, where normally $v_s=1$ and only very rarely $v_s\ne 1$. 
Finally, we wish to be able to overcome up to  $\pmb{f=2}$ process crashes. While sending an actual message chain from~$s$ to~$d$ would cost \$1001, null messages can be used to inform~$d$ that $v_s=1$ at a cost of \$2  if $v_s=1$ and no failures occur (see \Cref{fig:ONull}). With such a solution, if  $v_s=0$ then the cost may in the worst case be as high as~\$3003. But if the latter is rare and the former is very common, such use of null messages can provide a clear advantage.   E.g., if for every 100 runs in which $v_s=1$ and no failures occur we expect to see~2 runs where this is not the case, then   using null messages for the 102 runs will cost at most \$6206,  compared to \$102102 spent by a protocol that uses only actual messages in the network of \Cref{fig:ONet}. 

\begin{figure}[ht]
\begin{minipage}[b]{.4\textwidth}
\centering
\includegraphics[width=1\textwidth]{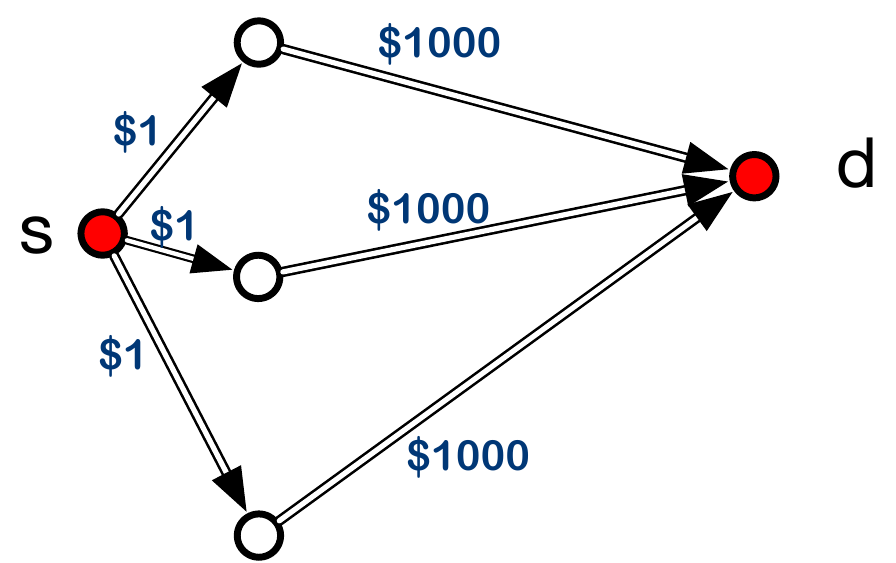}
\caption{A network with different communication costs. Sending an actual message chain from~$s$ to~$d$ costs \$1001. 
\\[-3ex]
\label{fig:ONet}}
\end{minipage}
\hfill
\begin{minipage}[b]{.5\textwidth}
\centering
\includegraphics[width=.8\textwidth]{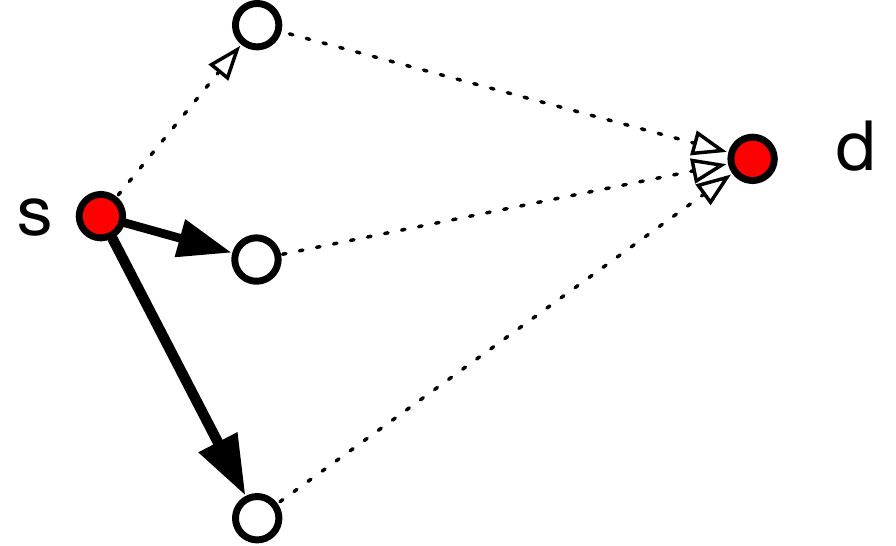}
\caption{The sender~$s$ can inform~$d$ that $v_s=1$ at a cost of~\$2  in a failure free run assuming a bound of $f=2$  failures.  
%The two solid arrows represent actual messages, and the dashed ones represent null messages.  %In ``bad'' runs where $v_s=0$ or one of the two messages from~$s$ is not received,  actual messages may be sent on many  channels.
} \label{fig:ONull} 
\end{minipage}
\end{figure}
%\begin{figure}[ht]
%\begin{minipage}[b]{.4\textwidth}
%\centering
%\includegraphics[width=1\textwidth]{}
%\includegraphics[width=1\textwidth]{}
%\caption{A communication network in which~$s$ and~$d$ are not neighbors. If the distance  from~$s$ to~$d$ is~$M$ and sending a message over a  link costs \$1, then creating a message chain from~$s$ to~$d$ costs~\$$M$. \\[4ex]\label{fig:ONet}}
%\end{minipage}
%\hfill
%\begin{minipage}[b]{.5\textwidth}
%\centering
%\includegraphics[width=.8\textwidth]{}
%\includegraphics[width=.8\textwidth]{}
%\caption{A protocol that  informs~$d$ that $v_s=1$ at a cost of~\$2  in a failure free run assuming a bound of $f=2$  failures.  The two solid arrows represent actual messages, and the dashed ones represent null messages. The {\em silent choir}  consists of~$s$ and the two processes it sends actual messages to. In ``bad'' runs where $v_s=0$ or one of the two messages from~$s$ is not received, then actual messages may be sent on many  channels.} \label{fig:ONull} 
%\end{minipage}
%\end{figure}

This paper investigates the role of null messages for information flow and coordination in synchronous systems with crash failures. Its main contributions are:
\begin{enumerate}
\item We provide a new  definition of  null messages, whereby not sending a message is considered to be a null message only if it conveys nontrivial information. 
Moreover, we  formalize an essential aspect of the synchronous model, by proving that {\em enhanced message chains}, which can contain both actual and null messages, are necessary 
for information transfer in synchronous systems.
\item We strengthen this result by proving   that in order for there to be information transfer from a process~$s$ to~$d$, process~$d$ must {\bf know} that an enhanced message chain from~$s$ has reached it. This result plays an important role in the analysis of information transfer in the presence of failures. %In previous works \cite{lamp,silence}, not sending a message over a channel was considered sending a null message even in case this event carried no information whatsoever.
\item We identify communication patterns called {\em resilient message blocks}, which are necessary for information transfer in crash-prone synchronous systems, proving a stronger and more general theorem than the Silent Choir Theorem of \cite{silence}. 
Based on this theorem, we provide necessary and sufficient conditions that characterize protocols for 
%two variants of information transfer: N
nice-run information transfer, in which the transfer should succeed in failure-free executions, and for Robust information transfer in which it should succeed more generally. %(\Cref{thm:OSN tight}, \Cref{thm:rob-nec}).
\item %\raissastar{Because the order between OR and Robust has been switched} \sout{Finally,} We 
Finally, we provide an analysis of communication requirements from protocols solving  the Ordered Response coordination problem, based on resilient blocks. 
\end{enumerate}
%Altogether, these results advance our understanding of what precise information the absence of message conveys and what patterns are necessary and sufficient for information transfer in fault-prone synchronous systems. 

This paper is structured as follows. 
%We now briefly overview the main steps in our analysis. 
In \Cref{sec:model}, we define the model and present preliminary definitions and results regarding null messages, knowledge, and communication graphs, which are used throughout the analysis.  
We show in \Cref{sec:nullmsgsemc} that it is impossible to achieve information transfer from a process $s$ to another process $d$ in synchronous systems without constructing an enhanced message chain from $s$ to $d$. We then prove the stronger fact  (in \Cref{cor:KoP-msg-chain}) that in order for $d$ to know the value of $v_s$, it must {\em know} that an enhanced message chain from $s$ to $d$ exists. 
This is followed in \Cref{sec:failures} by an analysis that identifies  the communication patterns that must be created by a protocol in runs where $d$ learns the value of~$v_s$ \Cref{thm:f-block-nec}. These structures, called {\em resilient message blocks},  can involve multiple enhanced message chains that must be arranged in a particular manner. %\raissastar{Changed because of the switch in the order between OR and Robust.}
The analysis is applied to characterize necessary and sufficient conditions on the communication patterns solving Nice-run Information Transfer in \Cref{sec:niceIT}. Based on \Cref{sec:failures} and the results of application of \Cref{sec:niceIT},   necessary and sufficient conditions for solving the Ordered Response coordination problem are presented in \Cref{sec:OR}.  Finally, patterns solving Robust Information Transfer are characterized in \Cref{sec:robust-signal}. A brief conclusion is presented in \Cref{sec:Conclusions}.

\subsection{Related Work}
Lamport's seminal paper \cite{Lamclocks} focuses on the role of message chains in asynchronous message passing systems. Indeed, Chandy and Misra showed in \cite{ChM} that the only way in which knowledge about the state of the system at remote sites can be gained in asynchronous systems is via the construction of message chains. As mentioned above, in his later paper \cite{lamp}, Lamport points out that in synchronous systems information can also be conveyed using null messages. 
In a more recent paper \cite{ben2014beyond}, Ben Zvi and Moses analyzed knowledge gain and coordination in a model in which processes are reliable (no process ever crashes) and share a global clock, and  there are upper bounds (possibly greater than~1) on message transmission times along each of the channels in the network. 
They extend the notion of a message chain to so-called {\em syncausal} message chains, which are sequences consisting of a combination of  time intervals that correspond to the upper bounds and actual messages.
They show that syncausal chains are  necessary and sufficient for point-to-point information transfer when $f=0$. Moreover, they define a coordination problem called Ordered Response (which we revisit in \Cref{sec:OR}) and show that a communication pattern they call a {\em centipede}, which generalizes message chains for their model, is necessary and sufficient for solving this problem.
As mentioned in the Introduction and as will be defined in \Cref{sec:model}, not sending a message  does not always count as a null message, even if message delays are bounded.
The notion of an enhanced message chain thus refines that of a syncausal message chain: 
Every  enhanced message chain is  a syncausal message chain, but the converse is not true.

Our paper extends  the work \cite{silence}. Their Silent Choir Theorem, discussed in the previous section, gives necessary conditions that are not sufficient even for failure-free executions. 
% They consider the problem of informing a process about the value of a given variable local to a remote process {\em without} creating a message chain that reaches the process. Their ``Silent Choir'' theorem  implies that in a failure-free execution, the only way that a process~$d$ can learn~$s$'s value without receiving an explicit message chain from~$s$ only after there is is for there to be a particular round~$m$ in which $f+1$ processes (called a {\em silent choir}) that received such a message chain from~$s$ do not send a message to~$d$. While this theorem provides insight into how silence can be used in a crash-failure model, it determines a necessary condition that is far from being sufficient. The existence of a silent choir does not guarantee that the value is indeed learned by~$d$, even in failure-free runs. Moreover, such information can be delivered in silence without the members of the silent choir being the ones who ``send'' null messages to~$d$. 
In the current paper we take the further step of characterizing necessary and sufficient properties of communication patterns that solve this problem, and investigate the role of silence in the more general coordination problem of Ordered Response coordination problem. 
None of the previous works (e.g., \cite{lamp,ben2014beyond,silence}) requires not sending a message to be informative in order to count as a null message. Making this requirement plays a technically significant role in the analyses performed in the current paper.

%\subsubsection{Optimizing for the Common Case} 
In \Cref{sec:niceIT,sec:OR,sec:robust-signal} we consider the design of protocols that are required to behave in a good way in the common case, which in this case is when the initial values are appropriate and no failures occur. 
Focusing on the design of protocols that are optimized for the common case has a long tradition in distributed computing (see, e.g., \cite{liskov1993practical,FastSlow, fairLedger}). In our synchronous model Amdur, Weber, Hadzilacos and Halpern use them in order to design efficient protocols for Byzantine agreement \cite{amdur1992message,hadzilacos1993message}. Guerraoui and Wang and others use them for Atomic Commitment \cite{GW17, silence}. Solutions for Consensus in a synchronous Byzantine model optimized for the common case appeared in \cite{silence2}, also making explicit use of null messages. 
However, how null messages can be used for information transfer and coordination has not been  characterized in a formal way.

\section{Model and Preliminary Results}
%\subsection{Communication and Failure model}
\label{sec:model}
%ROUND OR TIME (FOR NOW WE TALK ABOUT TIME BUT IN SILENCE THE COMMUNICATION WORKS WITH ROUND. NOT CRITICAL REGARDING CORRECTNESS BUT I THINK MORE CLEAR TO STAY LIKE IN SILENCE I.E., COMMUNICATION WITH ROUND)

We follow the modeling of \cite{silence}. We consider a standard synchronous message-passing model with a set~$\Proc$ of~$N> 2$ processes and benign crash failures. 
For convenience, one of the processes will be denoted by~$s$ and called the {\em source}, while another, $d$ can be considered as the {\em destination}.
%Say that in last section (costs) we consider the case where it is not the case
Processes are connected via a communication network defined by a directed graph $(\Proc,\chan{})$ where an edge from process~$i$ to process~$j$ is called a {\em channel}, and denoted by $\chan{i,j}$. We assume that the receiver of a message detects the channel over which it was delivered, and thus knows the identity of the sender.
The model is  {\em synchronous}: All processes share a discrete global clock that starts at time~$0$ and  advances by increments of one. 
Communication in the system proceeds in a sequence of \emph{rounds}, with round~$m+1$ taking place between time~$m$ and time~$m+1$, for~$m\ge 0$. A message sent at time~$m$ (i.e., in round~$m+1$) from a process~$i$ to~$j$ will reach~$j$ at time~$m+1$, i.e., at the end of round~$m+1$. In every round, each process performs local computations, sends a set of messages to other processes, and finally receives messages sent to it by other processes during the same round. At any given time~$m\ge 0$, a process is in a well-defined  \defemph{local state}. We denote by $r_i(m)$ the local state of process $i$ at time $m$ in the run $r$. For simplicity, we assume that the local state of a process~$i$  consists of its initial value~$v_i$, the current time~$m$, and the sequence of the events  that~$i$ has observed (including the messages it has sent and received) up to that time. 
In particular, its local state at time~0 has the form~$(v_i,0,\{ \})$. We focus on deterministic protocols, so a  \defemph{protocol} $Q$ describes what messages a process should send and what decisions it should take, as a  function of its local state.

Processes in our model are prone to crash failures. A faulty process in a given execution fails by \emph{crashing} at a given time. A process that crashes at time~$t$ is completely inactive from time $t+1$ on, and so it  performs no actions and in particular sends no messages in  round~$t+2$ and in all later rounds. It behaves correctly up to and including round~$t$. Finally, in round $t+1$ (which takes place between time~$t$ and time~$t+1$) this process sends a (possibly strict) subset of the messages prescribed by the protocol. For ease of exposition we assume that the process does not perform any additional local actions (e.g., decisions) in round~$t+1$.

For ease of exposition, we say that a process that has not failed up to and including time~$t$ is {\em active} at time $t$. We will consider the design of protocols that are required to tolerate up to~$f$ crashes.
We denote by~$\modelf$ the model described above in which  no more than~$f$ processes crash in any given run. 
We assume that a protocol has access to the values of~$N$ and~$f$ as well as to the communication  network $(\Proc,\chan{})$.
%, all typically passed  as parameters to the protocol.
A \defemph{run} is a description of a (possibly infinite) execution of the system.
%A protocol~$Q$ and an adversary~$\alpha$ uniquely determine a run, and we write~$r = Q [\alpha]$.
We call a set of runs a \defemph{system}.
%A set  of runs~$R$ (which we call a \defemph{system}) is defined by a protocol and a set of possible adversaries.
We will be interested in systems of the form~$R_Q=R(Q,\modelf )$ consisting of all runs of a given protocol~$Q$ in which no more than~$f$ processes fail. 
A failure pattern determines who fails in the run, and what messages it succeeds in sending when it fails. Formally, we define: 
\begin{definition}[Failure patterns]\label{def:fp}
A {\em failure pattern} for a model $\gamma^f$ is a set \[FP\triangleq\{\langle q_1,t_{1},Bl(q_1)\rangle,\ldots \langle q_k,t_{k},Bl(q_k)\rangle\}\] of $k\le f$ triples, where $q_i\in\Proc$,  $t_{i}\ge 0$, and  $Bl(q_i)\subseteq \Proc$ for $i=1,\ldots, k$.
We consider a run~$r$ to be {\em compatible with} a failure pattern  $FP=\{\langle q_1,t_{1},Bl(q_1)\rangle,\ldots \langle q_k,t_{k},Bl(q_k)\rangle\}$ if 
 
\begin{enumerate}
    \item each process $q_i$ fails in $r$ at time $t_{i}$ and only the processes $q_1,\ldots,q_k$ fail in $r$, while 
    \item for every process~$p$ to whom $q_i$ should send a message in round~$t_{i}+1$  according to~$Q$ (based on $q_i$'s local state  at time~$t_{i}$ in~$r$), a message  from~$q_i$ to~$p$ is sent at time $t_i$ in~$r$ iff $p\notin Bl(q_i)$. 
\end{enumerate}
\end{definition}
If a process $q_i$ is specified as failing in a pattern~$FP$, then for every  $p\in Bl(q_i)$, we consider the channel $\chan{q_i,p}$ to be {\em blocked} from  round $t_i+1$ on.

For ease of exposition in this paper we will restrict our attention to the case in which the source~$s$ has a binary initial value $v_s\in\{0,1\}$, while the initial values of all other processes $p\ne s$ are fixed. Thus, there are only two distinct initial global states in a system $R_Q$. Moreover,  the deterministic protocol~$Q$, a given initial global state (in our case the value of $v_s$) and a failure pattern uniquely determine a run. Relaxing these assumptions would not modify our results in a significant way; it would only make proofs quite a bit more cumbersome. 

\subsection{Defining Knowledge}
Our analysis makes use of a formal theory of knowledge in distributed systems.  %to capture how null messages affect what processes do or do not learn. 
We sketch the theory here; see \cite{fhmv} for more details and a general introduction to the topic. 
In general, a process~$i$ can be in the same local state in different runs of the same protocol. We shall say that two runs~$r$ and~$r'$ are {\em indistinguishable} to process~$i$ at time~$m$ if~$r_i(m)=r'_i(m)$.  
The current time~$m$  is represented in the local state~$r_i(m)$, and so, $r_i(m)=r'_i(m')$ can hold only if~$m=m'$.  Notice that since we assume that processes follow deterministic protocols, if $r_i(m)=r'_i(m)$ then process~$i$ is guaranteed to perform the same actions at time~$m$ in both~$r$ and~$r'$ if it is active at time~$m$.

\begin{definition}[Knowledge]
\label{def:know}
	Fix a system~$R$, a run~$r\in R$, a process~$i$ and a fact~$\varphi$.  
	We say that~$K_i\varphi$ (which we read as ``process~$i$ \defemph{knows}~$\varphi$'') holds at time~$m$ in~$r$ iff~$\varphi$ is true at time~$m$ at
 all runs~$r'\in R$ such that 
	$r_i(m)=r'_i(m)$. 
\end{definition}

\Cref{def:know} immediately implies the so-called {\em Knowledge property}: If~$K_i\varphi$ holds at time~$m$ in~$r$, then so does~$\varphi$. 
The logical notation for ``the fact~$\varphi$ holds at time~$m$ in the run~$r$ with respect to the system~$R$'' is $(R,r,m)\sat \varphi$. 
 Often, the system is clear from context and is not stated explicitly. In this paper, the system will typically consist of all the runs of a given protocol~$Q$ in the current model of computation, which we denote by $R_Q$. 
%We use Boolean operators such as~$\neg$ (Not) and~$\vee$ (Or) freely in the sequel. 
Observe that knowledge can change over time. Thus,  for example,~$K_j(v_i=1)$ may be false at time~$m$ in a run~$r$ and true at time~$m+1$, based perhaps on  messages that~$j$ does or does not receive in round~$m+1$.

An essential connection between knowledge and action in distributed protocols, called the \defemph{knowledge of preconditions principle} (KoP), is provided in \cite{Moses16}. It states that whatever must be true whenever a particular action is performed by a process~$i$ must be known by~$i$ when the action is performed.
More formally, we say that a fact~$\varphi$ is a \defemph{necessary condition} for an action~$\act$ in a system~$R$ if for all runs~$r\in R$ and times~$m$, if~$\act$ is performed at time~$m$ in~$r$ 
then~$\varphi$ must be true at time~$m$ in~$r$. In our model the KoP 
can be stated as follows: 
\begin{theorem}[KoP \cite{Moses16}]
	\label{thm:kop}
	Fix a protocol~$Q$ for~$\modelf$ and let~$\act$ be an action of process~$i$ in~$\RP$. 
	If~$\varphi$ is a necessary condition for~$\act$ in~$\RP$ then 
	$K_i\varphi$ is a necessary condition for~$\act$ in~$\RP$.
\end{theorem}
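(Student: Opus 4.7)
The plan is to prove the statement directly from \Cref{def:know}: whenever $i$ performs $\act$ at time~$m$ in a run $r\in\RP$, I will show that $\varphi$ holds at every run $r'\in\RP$ that is indistinguishable from $r$ to $i$ at time~$m$, which is exactly what $(\RP,r,m)\sat K_i\varphi$ demands. No induction on time or on message chains is needed; the argument is a one-step indistinguishability argument.

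First I would fix an arbitrary pair $(r,m)$ at which $i$ performs~$\act$, and choose any $r'\in\RP$ with $r'_i(m)=r_i(m)$. The key step is then to invoke the determinism of the protocol~$Q$: because a process's behavior at time~$m$ is a function of its local state alone, and because $i$ must be active at $(r,m)$ in order to perform~$\act$, the equality $r'_i(m)=r_i(m)$ forces $i$ to be active at time~$m$ in~$r'$ as well and to perform the identical action~$\act$ there. Applying the hypothesis that $\varphi$ is a necessary condition for~$\act$ in~$\RP$ pointwise at $(r',m)$ then yields $(\RP,r',m)\sat\varphi$. Since $r'$ was an arbitrary run indistinguishable from $r$ to~$i$ at time~$m$, \Cref{def:know} gives $(\RP,r,m)\sat K_i\varphi$, as required.

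The only subtlety I anticipate, and what I would regard as the ``hard part'' of an otherwise immediate argument, is confirming that ``performing action $\act$'' is genuinely a function of the local state $r_i(m)$. This rests on two model assumptions already built into \Cref{sec:model}: protocols are deterministic, and the local state $r_i(m)$ encodes the initial value~$v_i$, the current time~$m$, and the full sequence of events $i$ has observed up to time~$m$---which is precisely the input $Q$ uses when deciding what to do at time~$m$. Any information external to $r_i(m)$ (such as a future crash of $i$ in round~$m+1$, or the local state of another process) cannot influence whether $i$ performs~$\act$ at time~$m$, so indistinguishability at time~$m$ preserves both the ability and the choice to perform $\act$. Once this is spelled out, the result follows by unfolding definitions.
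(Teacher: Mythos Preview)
Your argument is correct and is precisely the standard proof of the Knowledge of Preconditions principle: determinism of~$Q$ makes the performance of~$\act$ a function of the local state, so any $r'$ with $r'_i(m)=r_i(m)$ also has~$i$ performing~$\act$ at time~$m$, whence the necessary condition~$\varphi$ holds there, and \Cref{def:know} yields $K_i\varphi$ at $(r,m)$.

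Note, however, that the paper does \emph{not} supply its own proof of \Cref{thm:kop}; it merely quotes the result from~\cite{Moses16} and uses it as a black box. So there is nothing to compare against in the paper itself. Your write-up is a faithful reconstruction of the argument one finds in the cited source, and the only point worth tightening is the ``active'' remark: in this model a crashed process performs no local actions after its crash, so if~$i$ performs~$\act$ at $(r,m)$ it has not yet crashed there, and any~$r'$ with $r'_i(m)=r_i(m)$ must likewise have~$i$ uncrashed through time~$m$ (otherwise the recorded event history in the local state would differ). With that observation, the transfer of~$\act$ to~$r'$ is immediate.
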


As observed in \cite{BzM2011}, in synchronous systems the passage of time can provide a process information about events at remote sites. (E.g., $p$ can know that~$q$ performs an action at some specific time, based purely on the protocol.) 
In order to focus on genuine flow of information between processes, we make the following definition:

\begin{definition}
    {\em Information transfer} (\IT) between~$s$ and~$d$ is achieved when $K_d(v_s=b)$ holds, for some value $b\in\{0,1\}$. 
\end{definition}

Since the initial value $v_s$ is independent of the protocol, for~$d$ to learn this value requires genuine flow of information from~$s$ to~$d$. 

\subsection{Null Messages and Enhanced Message Chains}\label{sec:nullmsgsemc}

%But what information is there in the fact that a message was not sent? That, it turns out, depends crucially on the protocol. More precisely, it depends on the conditions under which $i$ would not send this message. To account for this, we make the following definition. 
As discussed in the Introduction, if no message is ever sent over a given channel at time~$t$ under the protocol~$Q$, then the absence of such a message in a given execution is not informative. We now define not sending to be a null message only if it is informative: 

\begin{definition}
%[Sending a null message]
\label{def:null}
Let $r$ be a run of some protocol $Q$. Process $\pmb{i}$ {\bf  sends $\pmb{j}$ a  null message} 
%over $\chan{i,j}$ 
at~$(r,t)$ if
   \begin{itemize}
       \item  $\chan{i,j}$ is not blocked at $(r,t)$,
       \item $i$ does not send an actual message over $\chan{i,j}$ at $(r,t)$, and
       \item there is a run $r'\!$ of $Q$ in which $i$ sends an  actual  message over $\chan{i,j}$ at $(r',t)$.
   \end{itemize}
\end{definition}

We can now generalize message chains to allow for null messages as well as actual ones: %define what we will show to be the counterpart of actual message chain in the synchronous model.}

\begin{notation}
    We denote by $\theta=\node{p,t}$ the {\bf process-time} pair consisting of a process $p$ and time $t$.
    Such a pair is used to refer to the point at time~$t$ on~$p$'s timeline.
\end{notation}

\begin{definition}
%[Enhanced message chains]
\label{def:simple-msg-chain}
    Let $r$ be a run of a protocol $Q$. 
    We say that there is {\bf an enhanced message chain} from~${\theta}=\node{p,t}$ to 
 ${\theta'}=\node{q,t'}$ in~$r$, and write $\pmb{\theta\emc_{Q,r}\theta'}$, if there exist processes $p=i_1,i_2\ldots, i_k=q$ and times $t\le t_1<t_2<\cdots< t_{k}= t'$ such that for all $1\le h<k$ process~$i_h$ sends either an actual message or a null message to~$i_{h+1}$ at $(r,t_h)$. (We omit the subscript and write simply $\pmb{\theta\emc\theta'}$ when~$Q$ and~$r$ are clear from the context.)
     
    % We define the relation $\rightsquigarrow_{Q,r}$ between two pairs $\node{p,t}$, $\node{q,t'}$ the smallest binary relation satisfying:
    % \begin{enumerate}
    %     \item Locality: If $t\leq t'$ then $\rightsquigarrow_{Q,r}$ then $\node{p,t}\rightsquigarrow_{Q,r}\node{p,t'}$.
    %     \item null message: If $p$ sends $q$ a null message at time $t$ and $t'=t+1$.
    %     \item actual message: If $p$ sends $q$ an message at time $t$ and $t'=t+1$.
    %     \item Transitivity: If  $\node{p,t}\rightsquigarrow_{Q,r}\node{h,t''}$ and $\node{h,t''}\rightsquigarrow_{Q,r}\node{q,t'}$ then $\node{p,t}\rightsquigarrow_{Q,r}\node{q,t'}$
    % \end{enumerate}
    % A path in $\CG_Q(r)$ between two nodes $\theta$ and~$\theta'$   is an {\em enhanced} message chain in $r$. We write  $\theta\rightsquigarrow\theta'$ in $r$. 
\end{definition}
Observe that \Cref{fig:ONull} contains three enhanced message chains between process~$s$ and~$d$. Two of them contain a single actual message each, and one does not contain any actual message. 

We are now ready to show  that information transfer in synchronous systems requires the existence of an enhanced message chain. 

\begin{theorem}\label{thm:msg-chain}
Let $f\ge 0$ and let~$Q$ be a protocol and  $r\in R(Q,\modelf)$. Then  $K_d(v_s=1)$ holds  at $(r,m)$ only if $\node{s,0}\rightsquigarrow\node{d,m}$ in $r$.
\end{theorem}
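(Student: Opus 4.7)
I would prove the contrapositive by a standard indistinguishability argument. Suppose $\node{s,0}\not\emc\node{d,m}$ in $r$. The goal is to produce a run $r'\in\RP$ in which $v_s\neq 1$ and yet $r'_d(m)=r_d(m)$, thereby refuting $K_d(v_s=1)$ at $(r,m)$.

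\textbf{Construction and main induction.} Let $r'$ be the unique run of $Q$ determined by the same failure pattern as $r$ (hence $r'\in\RP$) but with the flipped initial value $1-v_s$ at $s$. The heart of the argument is an induction on $t$, for $0\le t\le m$, establishing the following claim: \emph{for every process $p$ with $\node{s,0}\not\emc\node{p,t}$ in $r$, one has $r_p(t)=r'_p(t)$.} The base case $t=0$ is immediate, since reflexivity of $\emc$ forces $p\ne s$, and non-source processes have the same initial state in both runs. For the inductive step, fix $p$ with $\node{s,0}\not\emc\node{p,t+1}$ in $r$. Using the trivial chain $\node{p,t}\emc\node{p,t+1}$ together with transitivity of $\emc$ we deduce $\node{s,0}\not\emc\node{p,t}$, so $r_p(t)=r'_p(t)$ by the inductive hypothesis. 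Because $r_p(t+1)$ is determined by $r_p(t)$ together with the actual messages $p$ receives in round $t+1$, it now suffices to show that every potential sender $q$ delivers the \emph{same} actual messages to $p$ in round $t+1$ of $r$ and of $r'$.

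\textbf{The key step.} Fix a sender $q$. If $\node{s,0}\not\emc\node{q,t}$ in $r$, the inductive hypothesis yields $r_q(t)=r'_q(t)$, and since $Q$ is deterministic and the shared failure pattern dictates the same crash-round behaviour for $q$, the actual messages $q$ sends to $p$ at $(r,t)$ and at $(r',t)$ coincide. The delicate case is $\node{s,0}\emc\node{q,t}$ in $r$, when $q$'s state may genuinely differ between the two runs. Here I invoke the forbidden-extension principle twice: if $q$ sent an actual message to $p$ at $(r,t)$, prepending the chain from $\node{s,0}$ to $\node{q,t}$ would give $\node{s,0}\emc\node{p,t+1}$, contradicting the hypothesis; the same argument rules out a null message from $q$ to $p$ at $(r,t)$. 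Applying the negation of \Cref{def:null} in $r$ (and using that $q$ sends no actual message), we are left with two possibilities: either $\chan{q,p}$ is blocked at $(r,t)$ — a property determined by the shared failure pattern and therefore also holding in $r'$ — or \emph{no} run of $Q$ has $q$ send an actual message to $p$ at time $t$, in which case $r'$ in particular has none. Either way, $q$ sends no actual message to $p$ at $(r',t)$, matching its behaviour in $r$.

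\textbf{Conclusion and main obstacle.} Applying the inductive claim at $t=m$ with $p=d$ gives $r_d(m)=r'_d(m)$; since $v_s$ differs between $r$ and $r'$, this contradicts $K_d(v_s=1)$. The principal obstacle is precisely the second case of the inductive step: sellers $q$ that lie in the chain-past of $\node{s,0}$ may behave differently in $r$ and $r'$, so we must exploit the refined \Cref{def:null} to transfer information about absent messages between runs. Under Lamport's coarser notion of null message, absence of a message from such a $q$ in $r$ would not preclude $q$ from sending an actual message to $p$ in $r'$; it is exactly the ``$\exists r''$'' clause in \Cref{def:null} that closes this gap and makes the induction go through.
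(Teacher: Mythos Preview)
Your proof is correct and follows essentially the same approach as the paper's: both construct $r'$ by flipping $v_s$ while keeping the failure pattern, and both prove by induction on $t$ that nodes outside the enhanced-chain past of $\node{s,0}$ have identical local states in $r$ and $r'$. The only cosmetic difference is in the delicate case: the paper argues by contradiction (if $q$ sends to $p$ in $r'$ then the channel is unblocked in both runs, forcing a null message in $r$), whereas you unfold the negation of \Cref{def:null} directly into the two subcases ``channel blocked'' and ``no run of $Q$ ever sends here''; the logical content is identical.
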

\begin{proof}
Assume, by way of contradiction, that $K_d(v_s=1)$ at $(r,m)$, and that  $\node{s,0}\not\rightsquigarrow \node{d,m}$ in $r$.  
By \Cref{def:know} it suffices to show a run~$r'\in R(Q,\modelf)$ in which $v_s\ne 1$ such that $r_d(m)=r'_d(m)$. 
Denote 
\[T~\triangleq~ \{\theta\in\mathbb{V}\!:\,\node{s,0}\not\rightsquigarrow\theta \text{ in } r\}.\]
I.e., $T$ is the set of nodes to which there is no enhanced message chain from $\node{s,0}$ in the run~$r$. Observe that, by assumption, $\node{d,m}\in T$. We construct a run~$r'$ as follows: 
The initial global state $r'(0)$ differs from $r(0)$ only in the value of the variable $v_s$ (thus, $v_s=0$ in~$r'$), which appears in $s$'s local state. All other initial local states are the same in $r'(0)$ and in $r(0)$. Finally, all processes have the same failure patterns in both runs. 
We now prove by induction on time~$t$ that $r_{i}(t)=r'_{i}(t)$ holds for all nodes $\node{i,t}\in T$.

\uline{Base}:~$t=0$. Assume that $\node{i,0}\in T$. By definition of~$T$, it follows that $i\ne s$, and by construction of $r'$ we immediately have that $r_{i}(0)=r'_{i}(0)$, as required. 

\uline{Step}: Let~$t>0$ and assume that the claim holds for all nodes $\node{j,t'}$ with $t'<t$.
Fix a node~$\node{i,t}\in T$.
Clearly,  $\node{i,t-1}\in T$, and so by the inductive hypothesis $r_{i}(t-1)=r'_{i}(t-1)$. 
To establish our claim regarding $\node{i,t}$, it suffices to show that~$i$ receives exactly the same messages at time~$t$ in both runs. 
Recall that  the synchrony of the model implies that the only messages that~$i$ can receive at time~$t$  are  ones sent at time~$t-1$. Hence, we reason by cases, showing that  every process~$z\ne i$ sends~$i$ the same messages at time~$t-1$ in both runs. 
\begin{itemize}
\item Suppose that $\node{z,t-1}\in T$. 
    Then by the inductive assumption $r_z(t-1)=r'_z(t-1)$, i.e., process $z$ has the same local state at time~$t-1$  in both runs. Since~$Q$ is deterministic and since the runs $r$ and $r'$ have identical failure patterns, $z$ sends~$i$ a message in $r$ at time~$t-1$ in~$r'$ iff it does so in~$r$. Moreover, if it sends a message, it sends the same message in both cases.
\item Suppose that $\node{z,t-1}\notin T$, i.e., there is an enhanced message chain from $\node{s,0}$ to $\node{z,t-1}$ in $r$. Since $\node{i,t}\in T$ we have that $z$ does not send a message to $i$ at time $t-1$ in $r$. Assume by way of contradiction that $i$ receives such a message in~$r'$. In particular, this implies that the channel $\chan{z,i}$ is not blocked in $r'$, and since $r$ and $r'$ have the same failure pattern, $\chan{z,i}$ is not blocked in $r$.  
% Recall that, by definition, if $z$ does not send $i$ an actual message in $r$ while $\chan{z,i}$ is not blocked and it does in some other run $r'$, then $p$ sends $q$ a null message in $r$. 
% Hence, by definition of \sout{$\CG_Q(r)$} \raissa{null messages} there is a null message between \sout{$(\node{z,t-1},\node{i,t})\in \Enull(r)$} \raissa{$\node{z,t-1}$ and $\node{i,t}$}. 
Hence, by definition, there is a null message from $\node{z,t-1}$ to $\node{i,t}$ in $r$. This contradicts the fact that, by assumption, $\node{i,t}\in T$.
It follows that, in both~$r$ and~$r'$, process $i$ does not receive any message from~$z$ at time~$t$.
\end{itemize}
Since $r_{i}(t-1)=r'_{i}(t-1)$ process~$i$ performs the same actions at time~$t-1$ in both runs. Since, in addition, $i$ receives exactly the same messages at $(r',t)$ as it does in~$(r,t)$ as we have shown, it follows that $r_i(t)=r'_i(t)$. 

The inductive argument above showed that, for all processes~$i$ and all times~$t\le m$, if~$i$ is has not failed by time~$t$ and  $\node{i,t}\in T$, then  $r_i(t)=r'_i(t)$.
Since $\node{d,m}\in T$ by assumption, it follows that, in particular, 
$r_{d}(m)=r'_{d}(m)$. 
Since $v_s \ne  1$ in~$r'$ we obtain that~$\neg K_{d}(v_s =1)$ at time~$m$ in~$r$ by \Cref{def:know}. This contradicts the assumption that $K_{d}(v_s =1)$ holds at time~$m$ in~$r$, completing the proof.
\end{proof}

\Cref{thm:msg-chain} establishes that enhanced message chains are necessary for information transfer for all values of $f\geq 0$. In fact, when $f=0$, enhanced message chains are also sufficient.  (We omit a proof of this particular claim since it will follow from the more general \Cref{thm:OSN tight-suff}). 
This demonstrates that enhanced message chains play an analogous role in reliable synchronous settings to the one that standard message chains play in asynchronous systems (cf.~\cite{ChM}). 
 
 % This captures the added power provided by synchrony, since in asynchronous systems actual message chains are necessary \cite{ChM}. 
In reliable systems, null messages are detected as reliably as actual messages are.
As discussed  in the Introduction, however, this is no longer true in the presence of failures. 
The knowledge formalism allows us to crisply capture a stronger requirement than the one in \Cref{thm:msg-chain}, which lies at the heart of the issue.
 
\begin{theorem}\label{cor:KoP-msg-chain}
Let~$f\ge 0$, let  $r$ be a run of $R_Q=R(Q,\modelf)$ and denote $\theta_s=\node{s,0}$ and  $\theta_d=\node{d,m}$. Then $K_d(v_s=1)$ holds  at $(r,m)$ only if  $\;K_d(\theta_s\rightsquigarrow\theta_d)$ holds at $(r,m)$.
\end{theorem}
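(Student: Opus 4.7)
The plan is to deduce the statement from Theorem~\ref{thm:msg-chain} together with the self-referential nature of knowledge, rather than invoking KoP directly (although the flavor is the same). Unfolding Definition~\ref{def:know}, to prove $K_d(\theta_s \spc \theta_d)$ at $(r,m)$ it suffices to show that for every run $r' \in R_Q$ with $r'_d(m) = r_d(m)$, the enhanced message chain $\theta_s \spc \theta_d$ exists in $r'$.

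The key observation I would invoke is that the formula $K_d(v_s=1)$ itself depends only on $d$'s local state at time $m$: since the set of runs $\{r'' \in R_Q : r''_d(m)=r_d(m)\}$ coincides with $\{r'' \in R_Q : r''_d(m)=r'_d(m)\}$ whenever $r'_d(m)=r_d(m)$, we have $(R_Q,r,m)\sat K_d(v_s=1)$ iff $(R_Q,r',m)\sat K_d(v_s=1)$. (This is simply the $S5$ introspection-style observation that $K_d\varphi$ is itself a fact whose truth is fixed by $d$'s local state.) Thus, starting from $K_d(v_s=1)$ at $(r,m)$, we obtain $K_d(v_s=1)$ at $(r',m)$ for every such $r'$, and in particular $v_s=1$ holds in $r'$ by the Knowledge property.

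Having promoted the hypothesis to every $r'$ indistinguishable from $r$ to $d$ at time $m$, I apply Theorem~\ref{thm:msg-chain} to $r'$: since $K_d(v_s=1)$ holds at $(r',m)$, there must be an enhanced message chain $\node{s,0}\spc\node{d,m}$ in $r'$. Because this conclusion holds uniformly for every $r'\in R_Q$ with $r'_d(m) = r_d(m)$, Definition~\ref{def:know} yields $K_d(\theta_s \spc \theta_d)$ at $(r,m)$, as required.

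I do not expect any real obstacle; the whole argument is a one-line reduction once one notices that Theorem~\ref{thm:msg-chain} is already a universal statement over runs of $R_Q$, and that indistinguishability at $(r,m)$ transports the antecedent $K_d(v_s=1)$ into every indistinguishable run. The only subtlety worth flagging is that we rely on the enhanced-message-chain existence being a property of the run $r'$ itself (not of $d$'s view), so that Theorem~\ref{thm:msg-chain} can be invoked pointwise at each $r'$ and then re-collected via Definition~\ref{def:know}.
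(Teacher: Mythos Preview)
Your argument is correct and follows essentially the same route as the paper: use positive introspection (equivalently, that $K_d\varphi$ depends only on $d$'s local state) to propagate $K_d(v_s=1)$ to every $r'$ with $r'_d(m)=r_d(m)$, apply Theorem~\ref{thm:msg-chain} at each such $r'$, and then re-collect via Definition~\ref{def:know}. The paper's proof is exactly this, stated slightly more tersely.
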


\begin{proof}
Suppose that $(R_Q,r,m)\sat K_d(v_s=1)$. \Cref{def:know} implies that $(R_Q,r',m)\sat K_d(v_s=1)$ holds for every run~$r'$ such that $r'_d(m)=r_d(m)$. By \Cref{thm:msg-chain} it follows that 
$(R_Q,r',m)\sat(\theta_s\rightsquigarrow\theta_d)$ for every such~$r'$, and so by \Cref{def:know} we obtain that $(R_Q,r,m)\sat K_d(\theta_s\rightsquigarrow\theta_d)$, as claimed. 
\end{proof}

\section{Dealing with Failures}
\label{sec:failures}
The need to know that an enhanced chain has reached~$d$, established in \Cref{cor:KoP-msg-chain}, is not the same as the mere existence of such a chain. 
This difference matters when processes may fail, because then silence can be ambiguous, and null messages can be confused with process crashes. 
How, then, can~$d$ come to know that an enhanced chain has reached it, in a setting where $f>0$ processes can crash? 
One possibility would be to have the protocol construct $f+1$ enhanced chains from $\node{s,0}$ to $\node{d,m}$ whose sets of participating processes are pairwise disjoint.%
\footnote{A similar issue arises in the  Byzantine Agreement literature (cf.~\cite{dolev1982byzantine})
where many process-disjoint chains  are used to overcome the possibility of failures.}
While such an assumption may be needed in a protocol that in a precise sense  guarantees information transfer (we revisit this point in \Cref{sec:robust-signal}), in many instances the destination process can learn the sender's value even if the protocol does not employ such a scheme. 
% For example,  an actual message chain  constructed between~$s$ and~$d$ can inform~$d$ of the value of~$v_s$, since message contents are truthful in the crash failure model. As shown in~\cite{silence}, however, information transfer can  occur without an actual message chain.

Our purpose is to investigate the communication patterns under which~$d$ can learn the  value of~$v_s$. 
 %Our goal in this section is to characterize  communication structures that enable information transfer in the presence of failures. 
 For this purpose, we will find it convenient to associate a ``communication graph'' with every run, which we define as follows.  The nodes of the graph are process-time pairs. Edges correspond to messages sent among processes, to~null messages, and a local tick of the clock at a process. More formally:

\begin{definition}[Communication Graphs]
\label{def:CGqr}
The {\em communication graph} of a run $r$ of protocol $Q$ is  $\CG_Q(r)\triangleq (\mathbb{V},E)$, with nodes $\mathbb{V}=\Proc\times \mathbb{N}$ and edges $E=\Eloc\cup \Eactual(r)\cup \Enull(r)$, where 
\begin{itemize}
    \item  $\Eloc=\left\{ (\node{i,t},\node{ i,t+1}):i\in\Proc, t\in\mathbb{N}\right\}$,
    \item $\Eactual(r)$~$=\left\{ (\node{ i,t},\node{ j,t+1}):\text{${i}$ sends an actual message to ${j}$ at time ${t}$ in ${r}$}\right\}$,
    \item $\Enull(r)=\{ (\node{ i,t},\node{j,t+1}):i$ sends a null message to $j$ at time $t$ in~$r\}$
\end{itemize}
\end{definition}

Notice that both the set of nodes~$\mathbb{V}$ and the set $\Eloc$ of local edges are the same in all  communication graphs. 
Observe that the communication graph directly represents enhanced message chains: 
$\pmb{\theta\rightsquigarrow_{Q,r}\theta'}$ holds if and only  if $\CG_Q(r)$ contains a path from~$\theta$ to~$\theta'$. 

\subsection{Resilient Message Blocks}
\label{sec:blocks}

The Silent Choir Theorem of \cite{silence} states that a necessary condition for $K_d(v_s=1)$ to hold at time~$m$ without an actual chain from~$s$ to~$d$, is for there to be actual message chains to $f+1$ members of the silent choir, after which they are all silent to~$d$ at time~$m-1$. As discussed in the Introduction, however, these members need not send~$d$ a {\em null message} at~$m-1$. Indeed, members of the ``choir'' need not even be neighbors of~$d$. In this section we will present a strictly stronger condition on the communication pattern than the one in their theorem, called a resilient message block. The new condition will also be more informative as it is explicitly formulated in terms of  null messages.  
Moreover, for the interesting case of optimizing communication for  failure-free executions, our resilient message blocks will be both necessary {\em and} sufficient for information transfer.

Very roughly speaking, a process that knows about failures might detect the existence of an enhanced chain more easily than one who is unaware of failures. E.g., if $d$ has detected all~$f$ faulty processes, then it can readily detect null messages sent by correct processes. Correctly coping with such issues requires a somewhat subtle definition and theorem statement.

\begin{notation}[$B$ null free paths]%$\Bnullfree$ path]
Fix a protocol~$Q$, let~$r$ be a run of~$Q$, and let~$B$ be a set of processes. 
A path~$\pi$ in $\CG_Q(r)$ that does not contain null messages sent by processes in the set~$B$  is called {\em $B$ null free} (we write that ``$\pi$ is  $\Bnullfree$'' for brevity). 
\end{notation}

A $\Bnullfree$ path can contain null messages, but not ones ``sent'' by members of~$B$. Roughly speaking, if the members of~$B$ crash, this path can remain a legal enhanced message chain. In light of \Cref{cor:KoP-msg-chain}, we are now ready to characterize the properties of communication graphs of protocols that enable information transfer. 
%The existence of an enhanced message chain reaching $d$ is not sufficient for $d$ to {\em know} that the enhanced message chain exists. This is because  the possibility of failures makes null messages ambiguous. Consequently,~$d$ may be unable to identify an enhanced message chain that reaches it by a null message, for example.    However, 
%Very roughly speaking, a process that knows about failures might detect the existence of an enhanced chain more easily than one who is unaware of failures. E.g., if $d$ has detected all~$f$ faulty processes, then it can readily detect null messages sent by correct processes. Correctly coping with such issues requires a somewhat subtle definition and theorem statement: 
% If a communication graph contains an actual message chain, or several enhanced message chains, $d$ might be able to know that it is reached by an enhanced message chain. We define the following structure and we show in \Cref{thm:f-block-nec} that its existence is necessary for $d$ to know that an enhanced message chain reaches it.
Using $\Bnullfree$ paths we can now define the central communication patterns that will play a role in our analysis: 
\begin{definition}[$(\fF)$-resilient message block]\label{def:msg-res-block-gen}
Let $r$ be a run of a protocol $Q$ and denote by $\faulty^r$ the set of processes that fail in $r$. Let $\theta,\theta'\in\Proc\times\Nat$ be two nodes. An {\em $(\fF)$-resilient message block from $\theta$ to $\theta'$ in} $\CG_Q(r)$ is a set~$\Gamma$ of paths between~$\theta$ and~$\theta'$ such that for every set of processes $B$ such that $|B\cup \faulty^r| \leq f$,  
there is a $\Bnullfree$ path in $\CG_Q(r)$ from $\theta$ to $\theta'$ in~$\Gamma$.
\end{definition}

Recall that an actual message chain contains no null messages, and is thus a $\Bnullfree$ path for every set $B$ of processes. As a result, an actual message chain between two nodes is, in particular, an $\fF$-resilient message block, for all runs and {\em all} values of~$f\ge0$. 
Our next theorem states that in order for a process~$d$ to know in some run $r$ at time $m$ that there is an enhanced message chain from some node to itself, there must be a resilient message block between them.
Roughly speaking, the claim is proved by way of contradiction. We assume a set $B$ of processes contradicting the assumption and construct a run~$r'$ that is indistinguishable to~$d$ from~$r$ in which nodes involving processes of $B$ cut all paths from $\theta_s=\node{s,0}$ to  $\theta_d=\node{d,m}$. I.e., there is no enhanced message chain from $\theta_s$ to $\theta_d$ in $r'$. More precisely, given a set $B$ of processes we will define the set $T_B$ to be the set of nodes to which there is {\em no} $\Bnullfree$ path from $\theta_s$. We give an example of such a set in \Cref{fig:proof}. The highlighted nodes are in $T_B$ while the others are not in $T_B$.\footnote{For the sake of clarity we do not draw all of the nodes and edges of the communication graph. Thus, for example, we do not represent all the nodes along local time lines and the local edges in $\Eloc$ that connect them.} 
As detailed in the complete proof, when constructing $r'$, we make the processes of $B$ fail at times that make these failures unnoticeable by  processes appearing in $T_B$ (and hence by the contradiction assumption neither by $d$ at time $m$). As a result, process $d$ does not know at $(r,m)$ that an enhanced message chain has reached it. Since by  \Cref{thm:msg-chain}, this is a necessary condition, we conclude that  $\neg K_d(v_s=1)$ at $(r,m)$, as claimed.  We can now show:

\begin{figure}[!h]
\centering
    \includegraphics[width=11 cm]{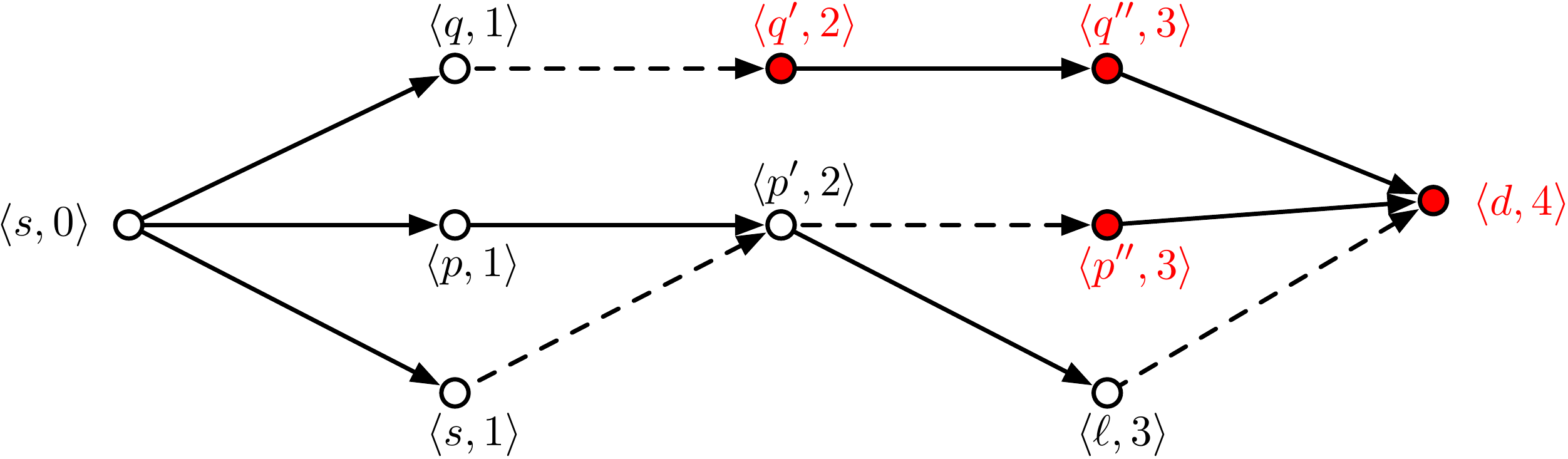}%{}
    \caption{A communication graph and its corresponding set $T_B$ (highlighted) for $B=\{q,p',\ell\}$.}
    \label{fig:proof}
\end{figure}

%We are now ready to state our central theorem, which captures the essential structure of communication patterns that allow for information transfer. 

\begin{theorem}\label{thm:f-block-nec}
Let $r$ be a run of a given protocol $Q$ and let $\faulty^r$ denote the set of faulty processes in $r$.  If~$K_d(v_s=1)$ holds at $(r,m)$, there is an $(\fF)$-resilient message block from $\theta_s\triangleq\node{s,0}$ to $\theta_d\triangleq\node{d,m}$ in $\CG_Q(r)$.
\end{theorem}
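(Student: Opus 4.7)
The plan is to prove this by contradiction, adapting the construction from the proof of \Cref{thm:msg-chain}. Suppose that the set of all paths from $\theta_s$ to $\theta_d$ in $\CG_Q(r)$ does not form an $(\fF)$-resilient message block. Then there exists a set $B \subseteq \Proc$ with $|B \cup \faulty^r| \le f$ such that no $\Bnullfree$ path from $\theta_s$ to $\theta_d$ exists in $\CG_Q(r)$. The goal will be to construct a run $r' \in R(Q,\modelf)$ in which $v_s = 0$ yet $r'_d(m) = r_d(m)$; by \Cref{def:know} this yields $\neg K_d(v_s=1)$ at $(r,m)$, contradicting the hypothesis.

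Generalizing the set $T$ from the proof of \Cref{thm:msg-chain}, I would define
\[
T_B \;\triangleq\; \{\theta \in \mathbb{V} \,:\, \text{no $\Bnullfree$ path from $\theta_s$ to $\theta$ exists in $\CG_Q(r)$}\}.
\]
By assumption $\theta_d \in T_B$, while $\theta_s \notin T_B$ since the empty path is trivially $\Bnullfree$. For each $z \in B$, set $t_z^\ast \triangleq \min\{t : \langle z,t\rangle \notin T_B\}$ (or $\infty$ if the set is empty). The run $r'$ is then defined by changing $v_s$ to $0$, retaining the failure pattern of $r$ for processes in $\faulty^r$, and additionally failing each $z \in B \setminus \faulty^r$ at time $\max(t_z^\ast - 1, 0)$ with $Bl(z)$ chosen so that in the round in which $z$ partially fails, the actual messages $z$ sends in $r'$ are exactly those it sends in $r$ to recipients $j$ satisfying $\langle j,t_z^\ast\rangle \in T_B$. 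For $z \in B \cap \faulty^r$, the failure time and $Bl(z)$ are combined with those from $r$. Since the processes failing in $r'$ lie in $B \cup \faulty^r$, there are at most $f$ failures, so $r' \in R(Q,\modelf)$.

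The core technical step is to prove by induction on $t$ that $r_i(t) = r'_i(t)$ for every $\langle i,t\rangle \in T_B$. The base case $t = 0$ holds because $\theta_s \notin T_B$, so the alteration of $v_s$ does not affect any $T_B$-node at time $0$. For the inductive step, if $\langle i,t\rangle \in T_B$ then $\langle i,t-1\rangle \in T_B$ (otherwise the local edge would extend a $\Bnullfree$ path), so by the inductive hypothesis $r_i(t-1) = r'_i(t-1)$, reducing the argument to matching the actual messages received by $i$ at time $t$ from each sender $z$. If $\langle z,t-1\rangle \in T_B$, the inductive hypothesis on $z$ combined with the choice of $Bl(z)$ (for $z \in B$) or the preserved failure pattern (for $z \notin B$) ensures that $z$ sends the same actual messages to $i$ in both runs. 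If $\langle z,t-1\rangle \notin T_B$, then any actual edge from $\langle z,t-1\rangle$ to $\langle i,t\rangle$, or any null edge whose sender lies outside $B$, would extend a $\Bnullfree$ path and contradict $\langle i,t\rangle \in T_B$; hence no such actual message is sent in $r$, and in $r'$ either $z$ has already been forced to crash before round $t$, or the channel $\chan{z,i}$ is blocked through $z$'s preserved failure in $\faulty^r$, or the protocol itself prescribes no such send, so none is sent in $r'$ either.

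The main obstacle I anticipate is the coordinated choice of the failure time and $Bl(z)$ for each $z \in B$: these must be just aggressive enough to prevent any leakage of the altered $v_s$ into the $T_B$ portion of $r'$, while still reproducing every actual message that $r$'s communication graph delivers to $T_B$-destinations. The interaction with $z$'s possible pre-existing failure in $\faulty^r$ (when $z \in B \cap \faulty^r$) requires a further case split on the relative ordering of $t_z^\ast$ and $z$'s original failure time, and verifying that $r'$ is a legal run of $\modelf$ under these combined failure patterns is the main bookkeeping task. Once the induction is established, instantiating it at $\langle d,m\rangle \in T_B$ yields $r_d(m) = r'_d(m)$, which closes the contradiction and proves the theorem.
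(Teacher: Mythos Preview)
Your skeleton---choose a violating set $B$, define $T_B$ as the nodes unreachable from $\theta_s$ by any $\Bnullfree$ path, build an alternative run $r'$ by crashing the members of $B$ at carefully chosen times, and prove by induction on~$t$ that every $T_B$-node has the same local state in $r$ and $r'$---is exactly the paper's. The one genuine difference is in \emph{what} $r'$ is meant to witness. You flip $v_s$ to~$0$ and aim to contradict $K_d(v_s=1)$ directly; the paper instead keeps $r'(0)=r(0)$ (so $v_s$ is unchanged) and shows that $r'$ contains \emph{no enhanced message chain} from $\theta_s$ to $\theta_d$, contradicting $K_d(\theta_s\rightsquigarrow\theta_d)$ and then invoking \Cref{cor:KoP-msg-chain}. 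The paper's detour through \Cref{cor:KoP-msg-chain} buys a much simpler construction: since the initial state is untouched, each $b\in B$ can be crashed at its critical time sending \emph{nothing}, with no need for your delicate choice of $Bl(z)$ that reproduces $r$'s messages to $T_B$-recipients; the paper then handles the case $\langle z,t-1\rangle\notin T_B$ in one line using the failure-pattern ordering $FP(r')\le FP(r)$ and \Cref{lem:subgraph-gen}. Your route avoids the extra step of arguing that $r'$ lacks an enhanced chain, but pays for it in exactly the bookkeeping you flag.

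Two small gaps worth tightening. First, your induction hypothesis should be restricted to $\langle i,t\rangle\in T_B$ with $i$ not yet crashed in $r'$ (as the paper does): if $i\in B$, it may have been crashed early in $r'$ while still active in $r$, and then $r_i(t)\ne r'_i(t)$ trivially. Second, the clause ``for $z\in B\cap\faulty^r$, the failure time and $Bl(z)$ are combined with those from $r$'' is too vague to verify; you need to say precisely that $z$ fails at the earlier of the two candidate times, and spell out $Bl(z)$ so that (i) in its partial-send round $z$ delivers to each $T_B$-recipient exactly what it delivers in $r$, and (ii) the resulting $r'$ is a legal run of $Q$ (i.e., the messages sent are a subset of those prescribed by $Q$ given $z$'s state in $r'$, which in turn relies on the inductive equality at $\langle z,t_z^\ast-1\rangle\in T_B$). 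None of this is fatal, but it is the place where your argument needs real work, whereas the paper's version sidesteps it entirely.
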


The complete proof appears in the Appendix.%
\footnote{Every claim is completely proved either in the main  text or in the Appendix.}
We point out that \Cref{thm:f-block-nec} is stronger than the Silent Choir Theorem of \cite{silence}. Namely, as we claim in \Cref{lem:strongerthansilentchoir} the existence of the depicted resilient message block implies that a silent choir exists. However, the converse is not true. Remember the example of \Cref{figure1}. The set of processes $\{s,p\}$ is a silent choir for $f=1$ but clearly, there is no $f$-resilient message block --- take for instance $B=\{p\}$, there is no $\Bnullfree$ path from $\node{s,0}$ to $\node{d,2}$.

As the following lemma establishes, \Cref{thm:f-block-nec} implies the  Silent Choir Theorem: 
\begin{lemma}\label{lem:strongerthansilentchoir}
If there is an $\fF$-resilient message block from $\theta$ to $\theta'$ in $CG_Q(r)$, then there exists a silent choir from $\theta$ to $\theta'$ in $r$.
\end{lemma}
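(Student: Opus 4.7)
The plan is to construct $f+1$ silent choir members by iteratively invoking the resilience property of $\Gamma$. I would start with $B_0 := \faulty^r$ and observe that $|B_0 \cup \faulty^r| = |\faulty^r| \le f$, so the block property supplies a $B_0$-null-free path $\pi_0 \in \Gamma$ from $\theta$ to $\theta'$. From $\pi_0$ I extract a silent choir member $p_0$ (method described below), then set $B_1 := B_0 \cup \{p_0\}$ and repeat: as long as $|B_k \cup \faulty^r| \le f$, a fresh $B_k$-null-free path $\pi_k \in \Gamma$ exists, from which another choir member $p_k$ can be extracted. Since $\pi_k$ contains no null message sent by any process in $B_k$, the extracted $p_k$ is automatically distinct from all previous members. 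Iterating for $f - |\faulty^r| + 1$ steps, together with the $|\faulty^r|$ crashed processes (which send no actual messages after failing and hence are trivially silent toward $d$ at time $m-1$), I obtain the required $f+1$ choir members.

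The extraction of $p_k$ from $\pi_k$ is the heart of the argument. A silent choir member must satisfy two properties: (a)~receive an actual message chain from $s$ by time $m-1$, and (b)~send no actual message to $d$ at time $m-1$. A natural candidate is the sender of the \emph{first} null message along $\pi_k$: the prefix of $\pi_k$ before this null consists only of actual messages and local-timeline edges, which yields (a) directly. For (b), the fact that this process's action along $\pi_k$ is to send a null rather than an actual message gives evidence of silence, and one argues---using the structure of $\pi_k$ and of $\Gamma$---that this silence aligns with $d$ at time $m-1$; alternatively, one refines the choice by walking further along $\pi_k$ to identify a process whose null edge targets $\node{d,m-1}$ itself, which cleanly satisfies (b).

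The main obstacle I anticipate is reconciling (a) and (b) for a single extracted process. The enhanced chain $\pi_k$ intermixes actual and null messages: the first null sender provides a clean actual prefix from $s$ (good for (a)), but that sender's null may not be directed at $d$ at time $m-1$, creating a gap with (b); conversely, a later null sender whose null does target $\node{d,m-1}$ handles (b) directly but may have had null-message predecessors on $\pi_k$, breaking the actual-chain requirement of (a). Bridging this tension will likely require either restricting attention, without loss of generality, to paths in $\Gamma$ that are ``tight'' in the sense of ending with a null edge into $\node{d,m-1}$, or a recursive invocation of the block property on sub-structures of $\Gamma$ that witness the needed actual chains to the selected choir members.
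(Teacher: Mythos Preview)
Your iterative plan is workable in spirit, but you have misidentified where the difficulty lies, and the paper's argument is far more direct.

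The paper proceeds by contrapositive. Assuming there is neither an actual message chain from $\theta$ to $\theta'=\node{d,m}$ nor a silent choir, it lets $S$ be the set of \emph{all} processes $p$ with an actual message chain from $\theta$ to $\node{p,m-1}$, observes that the non-existence of a silent choir forces $|S\cup\faulty^r|\le f$, and sets $B\triangleq S\cup\faulty^r$. Then every path from $\theta$ to $\theta'$ in $\CG_Q(r)$ must contain a null edge (there is no actual chain), and the \emph{first} such null is necessarily sent by a process in $S\subseteq B$ (its prefix is an actual chain). Hence no $\Bnullfree$ path exists, contradicting the existence of an $(\fF)$-resilient block. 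That is the whole proof.

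The observation you are missing, and which dissolves the ``main obstacle'' you describe, is this: once you case-split and assume no actual chain from $\theta$ to $\theta'$ exists, \emph{every} process with an actual chain from $\theta$ by time $m-1$ is automatically silent toward $d$ at time $m-1$, because otherwise concatenation would produce an actual chain to $\theta'$. So your conditions (a) and (b) are never in tension; the first null-sender on any $\pi_k$ satisfies both. Your proposed remedies---walking further along $\pi_k$ to find a null edge into $\node{d,m-1}$, or recursively invoking the block property on substructures---are unnecessary and, as stated, not obviously sound.

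Separately, your inclusion of the $|\faulty^r|$ crashed processes as choir members, justified by ``they send no actual messages after failing,'' is incorrect: a process in $\faulty^r$ may crash at time $m$ or later and hence may well send $d$ an actual message at time $m-1$, and it need not have received any actual chain from $\theta$. Your iterative extraction does yield $f-|\faulty^r|+1$ genuine choir members outside $\faulty^r$, which together with $\faulty^r$ gives a union of size $f+1$; whether that suffices depends on the precise general form of the silent-choir condition in \cite{silence}. The contrapositive route sidesteps this bookkeeping entirely.
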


We are now ready to characterize the communication structures needed to solve information transfer and coordination in several interesting cases.  
\section{Application: Information Transfer in Nice Runs}
\label{sec:niceIT}
\Cref{thm:f-block-nec} is at the heart of information transfer in fault-prone synchronous systems. \mbox{$f$-resilient} message blocks constitute a necessary pattern for information transfer in our model since without them, one cannot know that an enhanced message chain reaches it.
Clearly, in a system prone to crash failures, it is natural to require for information to be conveyed in failure-free runs. Focusing on the design of protocols that are optimized for the common case has a long tradition in distributed computing and can be useful in applications including Consensus, Atomic Commitment, and blockchain protocols (see, e.g., \cite{amdur1992message,hadzilacos1993message,liskov1993practical,FastSlow, fairLedger, silence2}).

Clearly, the information that a null message conveys depends crucially on the protocol. More precisely, it depends on the conditions under which $i$ would not send an {\em actual} message. To account for this, we make the following definition: 
%can further formalize null messages as follows:

\begin{definition}%[Sending a null message {\em in case}~$\varphi$] 
\label{def:null-phi}
 We say that    process~$i$ {\bf sends  a null message} over $\chan{i,j}$ at time~$t$ {\bf in case}~$\pmb{\varphi}$ {\bf in} a given protocol~$\pmb{Q}$ if for every run~$r\in R_Q$  in which $\chan{i,j}$ is not blocked at $(r,t)$, process~$i$ sends a null message over $\chan{i,j}$ at $(r,t)$ ~iff~ $(R_Q,r,t)\sat\varphi$.
\end{definition}
 Roughly speaking, a process~$p$ ``sends a null message in case $\varphi$'' at time~$t$ if whenever it is active at time~$t$  and $\varphi$ does not hold, then~$p$ sends an actual message. If~$\varphi$ does hold at time~$t$, then~$p$ keeps silent.
By definition, a null message is sent in case~$\varphi$ only if $\varphi$ is true. Therefore, in a reliable system (i.e., when $f=0$), sending a null message in case~$\varphi$ informs the recipient that~$\varphi$ holds.

We focus on solving the \IT\ problem in {\em nice runs}, which are formally defined as follows:
\begin{definition}
Let $Q$ be a protocol. The run of $Q$ in which $v_s=1$ and no process fails is called $\pmb{Q}$'s {\bf  nice run}. We denote $Q$'s nice run by $\pmb{\rnice(Q)}$ and the communication graph of $\rnice(Q)$ by $\pmb{\nG(Q)}$. When $Q$ is clear from the context, we simply write $\pmb{\rnice}$.
\end{definition}
Every protocol~$Q$ has a unique nice run.  We show in this section that the conditions of \Cref{thm:f-block-nec} are also sufficient for information transfer in {\em nice} runs.
Clearly, in a failure-free execution~$r$, it holds that $\faulty^r=\emptyset$. An $\fF$-resilient message block with $\faulty^r=\emptyset$ is a central structure for information transfer in nice runs and will be used in \Cref{sec:OR}. It can be defined in slightly simpler terms as follows:
\begin{definition}[$f$-resilient message block]\label{def:msg-res-block}Let $\theta,\theta'\in\Proc\times\Nat$ be two nodes. An {\em $f$-resilient message block from $\theta$ to $\theta'$ in} $\CG_Q(r)$ is a set~$\Gamma$ of paths between~$\theta$ and~$\theta'$ such that for every set of processes $B$ of size $|B|\leq f$,  there is a $\Bnullfree$ path in $\CG_Q(r)$ from $\theta$ to $\theta'$ in~$\Gamma$. 
\end{definition}

% We denote by $f$-resilient message block an $(\fF)$-resilient message block in failure-free execution~$r$, in which $\faulty^r=\emptyset$. 
Observe that the presence of an $f$-resilient message block in the communication graph of a run $r$ suffices to ensure that in any run $r'$ that ``looks the same'' to~$d$ at $(r',m)$ i.e., $r_d(m)=r'_d(m)$, there is an enhanced message chain reaching $d$.

\begin{theorem}[Nice-run \IT]\label{thm:OSN tight}  
$f$-resilient message blocks are necessary and sufficient for solving~\IT\ in the nice run. Namely, 
\begin{itemize}
    \item (Necessity) If $K_d(v_s=1)$ at $(\rnice,m)$ then there is an $f$-resilient message block from $\theta_s$ to $\theta_d$ in $\nG$.
    \item \label{thm:OSN tight-suff} (Sufficiency) If a communication graph $\CG$ contains an $f$-resilient message block between $\theta_s$ and $\theta_d$, then there exists a protocol $Q$ such that $\nG(Q)=\CG$ that solves \IT\ between $\theta_s$ and $\theta_d$. 
\end{itemize}
\end{theorem}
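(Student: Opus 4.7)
The theorem has two parts. Necessity is essentially a corollary of \Cref{thm:f-block-nec}: in the nice run $\rnice$ no process fails, so $\faulty^{\rnice}=\emptyset$ and the requirement $|B\cup\faulty^{\rnice}|\le f$ in \Cref{def:msg-res-block-gen} collapses to $|B|\le f$, recovering \Cref{def:msg-res-block}. Hence, if $K_d(v_s=1)$ at $(\rnice,m)$, applying \Cref{thm:f-block-nec} to $\rnice$ immediately yields an $f$-resilient message block from $\theta_s$ to $\theta_d$ in $\nG$.

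The work is in sufficiency. Given $\CG$ with an $f$-resilient block $\Gamma$ between $\theta_s$ and $\theta_d$, I plan to explicitly construct a protocol $Q$ and then verify $\nG(Q)=\CG$ and $K_d(v_s=1)$ at $(\rnice(Q),m)$. For each node $\node{i,t}$ let $A_i(t)$ and $N_i(t)$ denote the neighbors reached by actual and null outgoing edges of $\CG$, respectively. The protocol $Q$ is: at time $0$, if $v_s=1$ process $s$ sends an actual message to each $j\in A_s(0)$ and stays silent toward $N_s(0)$, while if $v_s=0$ it sends an actual ``alert'' to every $j\in A_s(0)\cup N_s(0)$. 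At any later time $t>0$, each process $i$ checks whether it has already received an alert; if so it sends an actual alert to every $j\in A_i(t)\cup N_i(t)$; if not it sends an actual message to each $j\in A_i(t)$ and stays silent toward $N_i(t)$. On channel-time pairs not represented in $\CG$, $i$ is silent in every run, so those are not null messages per \Cref{def:null}.

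To establish $\nG(Q)=\CG$, I would note that in $\rnice$ no alert is ever generated and every process follows $\CG$ on the dot. To confirm that each null edge $(\node{i,t},\node{j,t+1})\in\CG$ really is a null message under \Cref{def:null}, I would exhibit the failure-free $v_s=0$ run as the required alternative execution: by induction along the subpath of any $\pi\in\Gamma$ that contains this edge, the alert reaches $i$ by time $t$, so $i$ sends an actual message to $j$.

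For the knowledge claim, suppose $r'\in R_Q$ satisfies $r'_d(m)=\rnice_d(m)$ and, for contradiction, has $v_s=0$ in $r'$ with failure pattern $B$, $|B|\le f$. By $f$-resilience of $\Gamma$ there is a $\Bnullfree$ path $\pi$ from $\theta_s$ to $\theta_d$; every null edge on $\pi$ has a sender outside $B$, hence one that is active in the relevant round. Induction along $\pi$ shows that the alert propagates end-to-end, with each non-$B$ sender on $\pi$ upgrading its null edge of $\CG$ to an actual alert message. Thus $d$ receives an alert by time $m$ in $r'$, contradicting $r'_d(m)=\rnice_d(m)$. The main obstacle is this propagation argument, and in particular handling the boundary case where $s\in B$ (so the first edge of $\pi$ must already be actual, and $s$'s failure behavior at time~$0$ as dictated by \Cref{def:fp} must be reconciled with the requirement that this actual message still be transmitted).
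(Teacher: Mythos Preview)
Your necessity argument is correct and coincides with the paper's.

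For sufficiency, your overall strategy---construct a protocol mirroring $\CG$ and show that along some $\Bnullfree$ path the deviation from $\rnice$ reaches $d$---is also the paper's. However, your concrete protocol has a genuine flaw: the trigger ``has already received an alert'' is too weak. A $\Bnullfree$ path $\pi$ may contain \emph{actual}-message edges whose senders lie in~$B$; nothing in the definition of $\Bnullfree$ rules this out. If such a sender crashes before transmitting, the next process on~$\pi$ receives no alert and, under your rule, behaves exactly as in~$\rnice$. Your induction ``the alert propagates end-to-end'' therefore breaks at that edge, and this is not merely a presentational issue: the protocol itself fails to solve \IT.

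Concretely, take $f=1$ and let $\CG$ consist of actual edges $\node{s,0}\!\to\!\node{p,1}$ and $\node{s,0}\!\to\!\node{q,1}$ together with null edges $\node{p,1}\!\to\!\node{d,2}$ and $\node{q,1}\!\to\!\node{d,2}$. These two paths form an $f$-resilient block. In the run $r'$ with $v_s=0$ where $s$ crashes at time~$0$ with $Bl(s)=\{p,q\}$, neither $p$ nor $q$ receives any message, hence neither has ``received an alert,'' so both stay silent toward~$d$ just as in~$\rnice$. Thus $r'_d(2)=\rnice_d(2)$ even though $v_s=0$, and $K_d(v_s=1)$ fails at $(\rnice,2)$.

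The fix---and this is exactly what the paper does---is to replace your trigger by the predicate $K_i(\neg\nicerun)$: process~$i$ sends an alert (the paper's `0' message) whenever its current local state differs from its local state in~$\rnice$. A missing expected actual message then \emph{does} trigger alert mode, and the induction along $\pi$ goes through with three cases (local edge, actual edge, null edge) rather than one. You correctly flagged the $s\in B$ boundary case as delicate, but the same obstruction arises at \emph{every} actual edge of~$\pi$ whose sender lies in~$B$, not just the first, and it cannot be ``reconciled'' without strengthening the trigger as above.
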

\Cref{thm:OSN tight} gives a {\em tight} characterization of the communication patterns needed to solve~\IT\ in nice runs: Every solution must construct an $f$-resilient message block, and for every $f$-resilient message block, there exists an \IT\ protocol that uses only the paths in this block in its nice run. 
The necessity part of \Cref{thm:OSN tight} results from \Cref{thm:f-block-nec} applied to $\rnice$ where the set of faulty processes is $\faulty^r=\emptyset$. To show sufficiency, we need to describe a protocol~$Q$ as claimed. 
Before sketching the proof, we discuss and define a class of protocols used in the proof. 
% \yoram{give a pointer to the proof}
% \begin{corollary} [nice run \IT\ Necessity]\label{thm:OSN-nec-nice}

% \begin{proof}
% Let $Q$ be a protocol such that~$K_{d}(v_s=1)$ holds at time~$m$ in the nice run~$\rnice$ of $Q$. It clearly holds that $\faulty^{\rnice}=\emptyset$. Hence there must be an $f$-resilient message block from~$\theta_s$ to $\theta_d$ in~$\nG(Q)$.
% \end{proof}

%Recall that by the definition of the \IT\ problem, $K_d(v_s=1)$ must hold in $\rnice$. 
In protocols solving \IT\ in the nice run, messages only need to convey whether  the sender has detected that the run is not nice. To consider this more formally, for a given protocol~$Q$ we denote by~$\nicerun$ the fact ``the current run is $\rnice\,$.''
Typically, if $f>0$, it is impossible 
for a process to know that $\nicerun$ is true: Even if a process has observed no failures, or indeed, even if no failures have occurred by a given time~$t$, there may be  run indistinguishable to the process in which one or more processes fail after time~$t$. 
Nevertheless,  a process may readily know that $\neg\nicerun$, if it knows of a failure or detects that $v_s=0$. 
Of course, because of the Knowledge property, in the nice run~$\rnice$ itself, no process will ever know $\neg\nicerun$. 

\begin{definition}[Nice-based Message Protocols]
    A protocol $Q$ is a Nice-based Message protocol (\NbM\ protocol) if (i) All actual messages sent are single bit messages and whenever a process~$p$ sends an actual message, it sends a~`0' if $K_{p}\neg\nicerun$ and sends a~`1' otherwise (i.e., if $\neg K_{p}\neg\nicerun$) and (ii) for all processes~$p$, each null message sent by $p$ over any channel is a null message in case~$\neg K_{p}\neg \nicerun$.
\end{definition}
%Given a communication graph $\CG$, it is strightforward we present in \Cref{alg:nbm} a corresponding \NbM\ protocol.
We remark that a process~$p$ can efficiently check  whether  $K_p(\neg\nicerun)$  by simply comparing $p$'s local state at $(r,t)$ to its local state at $(\rnice, t)$. If the two states are identical, then the predicate $K_p(\neg\nicerun)$ is false. Otherwise, it is true.

\begin{proofsketch} (of sufficiency in \Cref{thm:OSN tight}):%\yoram{should we leave the proof sketch in here?})

Suppose that $\CG$ contains an $f$-resilient message block between $\theta_s$ and $\theta_d$. The desired protocol~$Q$ is defined to be a Nice-based Message protocol such that $\nG(Q)=\CG$. I.e., for every edge $e\triangleq(\node{u,t},\node{v,t+1})$ in $\CG$: If $e\in\Enull$, then $u$ keeps silent if $\neg K(\neg\nicerun)$ and should send a `0' to $v$ otherwise. If $e\in\Eactual$, then $u$ should send `1' to $v$ if $\neg K_u(\neg\nicerun)$ and `0' otherwise. 
% Notice that if $e\in\Eactual$ and $K_u(\neg\nicerun)$, $u$ is free to either send a `0' message to $v$ or to keep silent.
%We consider a protocol $Q'$ that is a Nice-based message protocol with the communication graph as  and whose nice communication graph contains an $f$-resilient message block between $\theta_s$ and $\theta_d$. 
We show that the assumptions guarantee that for every run $r$ of $Q$ there is (at least one) path in $\nG(Q)$ from $\theta_s$ to $\theta_d$ along which no silent process fails in $r$. We show by induction on time that in every run $r$ in which $v_s=0$, for each node $\node{p,t}$ along this path, $p$ knows at time $(r,t)$ that the run is not nice. Hence, $d$ also knows at $(r,m)$ that the run is not nice. Since $\nicerun$ holds throughout $\rnice$, so does $\neg K_d \neg \nicerun$. It follows that $K_d(v_s=1)$ at $(\rnice,m)$, as claimed.
\end{proofsketch}

\section{Application: Coordination }\label{sec:OR}
$f$-resilient message blocks and Nice-run \IT\ are useful tools for solving more complex problems. We now show how they can be used to characterize solutions to the Ordered Response (\OR) coordination problem. This problem was originally defined in \cite{ben2014beyond}, and it requires a sequence of actions to be performed in linear temporal order, in response to a triggering signal from the environment.\footnote{In~\cite{ben2014beyond} Ordered Response was studied in a reliable setting with no crashes, and upper bounds on message delivery times; a very  different set of assumptions than here.}  For simplicity, we identify the signal to be received iff $v_s=1$.
We assume that each process $i_h\in\{i_1,i_2,\ldots,i_k\}$  has a specific action~$a_h$ to perform, and that the actions should be performed in order, provided that initially $v_s=1$. 
\begin{definition}[Ordered Response] \label{def:OR-input}
We say that a protocol~$Q$ is \textbf{consistent} with the instance~$\ORi=\angles{\factOR,a_1,\ldots,a_k}$ of the {\em Ordered Response (\OR)} problem if it guarantees that~$a_h$ is performed in a run only if~$v_s=1$ and~$a_1,...,a_{h-1}$ are performed.
In particular, if both~$a_{h-1}$ and~$a_h$ are performed at times~$t_{h-1}$ and~$t_h$ respectively, then~$t_{h-1}\leq t_h$.
Protocol~$Q$ \textbf{solves} this instance \ORi\ if, in addition, all of the actions~$a_h$ are performed in~$Q$'s nice run. 
\end{definition}

%For sake of simplicity and w.l.o.g. we assume that at each time $t$, each process $i_h$ can either send messages to other processes or perform $a_h$ but not both at the same time. 

%We start our analysis by giving a necessary condition that must hold on a protocol solving~\OR.

%By definition, in the nice run of a protocol solving  \OR, every action $\ba_h$ should be performed. Moreover, a necessary condition for performing action~$\ba_h$ is that $v_s=1$. 
%By the Knowledge of Preconditions principle (\Cref{thm:kop}), $K_{i_h}(v_s=1)$ must be true when $\ba_h$.  It follows that an \OR protocol must solve nice-run Information Transfer to each of the acting processes. 
%But the protocol should do more. 
Let us denote by $\ba_h$ the fact that the action $a_h$ has (already) been performed. 
Since, by definition of \OR,  both $v_s=1$ and $\ba_{h-1}$ are necessary conditions for performing~$a_h$,  the Knowledge of Preconditions principle (\Cref{thm:kop}) implies that these facts must be known when $\ba_h$ is performed:

\begin{lemma}\label{lem:OR-basic}
	Suppose that~$Q$ solves the instance $\ORi=\angles{\factOR,a_1,\ldots,a_k}$ of \OR. For every run~$r$  of~$Q$ and action $a_h$ performed (at time~$t_h$) in~$r$, we have 
	\begin{enumerate}
	    \item $(R_Q,r,t_h)\sat K_{i_h}(\factOR)$,  and 
	    \item $ (R_Q,r,t_h)\sat K_{i_h}(\;\!\ba_{h-1})$ if $h>1$. 
	\end{enumerate}
\end{lemma}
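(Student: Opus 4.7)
My plan is to show that both claims follow by a direct application of the Knowledge of Preconditions principle (\Cref{thm:kop}) to the two preconditions built into the definition of~\OR.

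First I would unpack what it means for~$Q$ to solve the instance~$\ORi$. By \Cref{def:OR-input}, the action~$a_h$ is performed in some run $r\in R_Q$ at time~$t_h$ only if $v_s=1$ and $a_1,\ldots,a_{h-1}$ have already been performed in~$r$ by time~$t_h$. The first condition says that $\factOR$ is true whenever~$a_h$ is performed, i.e., $\factOR$ is a necessary condition for the action~$a_h$ of process~$i_h$ in the system~$R_Q$. The second condition says that, for $h>1$, the fact~$\ba_{h-1}$ (that $a_{h-1}$ has already been performed) is a necessary condition for~$a_h$ in $R_Q$.

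Next, I would invoke \Cref{thm:kop} twice. Applied to the necessary condition~$\factOR$ and the action~$a_h$ of process~$i_h$, \Cref{thm:kop} immediately yields that $K_{i_h}(\factOR)$ is itself a necessary condition for~$a_h$ in~$R_Q$; hence, whenever~$a_h$ is performed at time~$t_h$ in~$r$, we have $(R_Q,r,t_h)\sat K_{i_h}(\factOR)$, which is the first claim. Applied instead to the necessary condition~$\ba_{h-1}$ (when $h>1$) and the same action~$a_h$, \Cref{thm:kop} gives that $K_{i_h}(\ba_{h-1})$ is necessary for~$a_h$ in~$R_Q$, yielding the second claim.

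The only subtlety worth spelling out, and what I would check carefully, is that~$\ba_{h-1}$ is a well-formed fact that holds at time~$t_h$ in~$r$ whenever~$a_h$ is performed there: since, by consistency, $a_{h-1}$ must have been performed by some time $t_{h-1}\le t_h$, the fact ``$a_{h-1}$ has already been performed'' is indeed true at $(r,t_h)$, so it is a legitimate necessary condition of the kind \Cref{thm:kop} requires. I do not expect any real obstacle here; the proof is essentially two one-line invocations of KoP, with the main content being the correct identification of the two preconditions in \Cref{def:OR-input}.
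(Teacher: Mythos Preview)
Your proposal is correct and matches the paper's own justification essentially verbatim: the paper simply notes (in the sentence preceding the lemma) that by \Cref{def:OR-input} both $\factOR$ and $\ba_{h-1}$ are necessary conditions for performing~$a_h$, and then invokes the Knowledge of Preconditions principle (\Cref{thm:kop}) to conclude that $i_h$ must know each of them when it acts. No additional argument is given or needed.
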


For a protocol~$Q$ solving an instance of Ordered Response, every action $a_h$,   is performed at some specific time $t_h$ in the nice run $\rnice=\rnice(Q)$. For ease of exposition we denote by by~$\theta_h\triangleq \node{i_h,t_h}$ the  node of $\nG(Q)$ where the action is taken, and by   $\theta_h^+\triangleq \node{i_h,t_h+1}$ the node of $i_h$ one time step after the node $\theta_{h}$.

We can use \Cref{lem:OR-basic} to provide necessary conditions on the nice communication graph of protocols that solve Ordered Response. \Cref{lem:OR-basic}(1) implies that~$Q$ must perform Nice-run \IT\ to $\theta_h$, for all actions $a_h$. 
\Cref{lem:OR-basic}(2), in turn, implies that $i_{h}$ needs to learn that $a_{h-1}$ has been performed in order to perform its action. A straightforward way to do this is by performing Nice-run \IT\  directly between consecutive actions, i.e., by creating an $f$-resilient block between $\theta_{h-1}$ and~$\theta_h$. 
While this is a possible solution, it is {\em not} the only way that $i_h$ can obtain this knowledge. Process~$i_h$ can also learn about the previous action {\em indirectly}. This requires $i_h$ to know that $i_{h-1}$ couldn't have failed (since otherwise~$i_h$ would have an indication that it failed) and that it received the information needed to perform~$a_{h-1}$.
The possibility of acting on indirect information is 
where solving the Ordered Response problem goes beyond Nice-run \IT. We can show the following:

\begin{theorem}[\OR\ Necessity] \label{thm:OR-necc-1}
Let~$Q$ be a protocol
solving $\ORi=\angles{\factOR,a_1,\ldots,a_k}$ for some sequence of times $t_1\leq t_2\leq \ldots\leq t_k$. Then $\nG(Q)$ must contain
the following blocks: \begin{enumerate}
    \item An $f$-resilient message block between~$\theta_s$ and~$\theta_x$, for each $x \leq k$; and
    \item \label{cond2}
    An $(f-1)$-resilient message block between $\theta_{x-1}^+$ and $\theta_{x}$ that  does not contain null messages sent by $i_{x-1}$, for each $1<x\le k$.
    \end{enumerate}
\end{theorem}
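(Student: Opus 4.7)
For condition~(1), the plan is to apply \Cref{lem:OR-basic}(1) directly. Since $Q$ solves $\ORi$, action $a_x$ is performed at $(\rnice,t_x)$ for each $x\le k$, so $K_{i_x}(v_s=1)$ holds there. Specializing \Cref{thm:f-block-nec} to $\rnice$ (where $\faulty^{\rnice}=\emptyset$, so the $\fF$-resilient block collapses to an $f$-resilient one) with destination $i_x$ at time $t_x$ yields the required $f$-resilient message block from $\theta_s$ to $\theta_x$ in $\nG$.

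For condition~(2), the plan is a contradiction argument. Fix $x$ with $1<x\le k$ and assume, toward a contradiction, that no $(f-1)$-resilient message block from $\theta_{x-1}^+$ to $\theta_x$ in $\nG$ omitting null messages sent by $i_{x-1}$ exists. Letting $\Gamma$ denote the set of \emph{all} paths in $\nG$ from $\theta_{x-1}^+$ to $\theta_x$ that contain no null message sent by $i_{x-1}$, this hypothesis forces a witness set $B\subseteq\Proc$ with $|B|\le f-1$ such that no path in $\Gamma$ is $\Bnullfree$; equivalently, every path from $\theta_{x-1}^+$ to $\theta_x$ in $\nG$ contains a null message sent by some process in $B\cup\{i_{x-1}\}$. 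The idea is then to construct a run $r'\in R_Q$ in which (a) $a_{x-1}$ is not performed and (b) $r'_{i_x}(t_x)=\rnice_{i_x}(t_x)$; this directly contradicts \Cref{lem:OR-basic}(2), which guarantees $K_{i_x}(\ba_{x-1})$ at $(\rnice,t_x)$.

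In $r'$ I would keep $v_s=1$, crash $i_{x-1}$ at time $t_{x-1}-1$ with $Bl(i_{x-1})=\emptyset$ so that its prescribed round-$t_{x-1}$ messages are delivered exactly as in $\rnice$ but no local action---in particular $a_{x-1}$---is executed in that round, and crash each $p\in B$ at a time (and with a blocking set) chosen so that every null message $p$ sends in $\nG$ that lies on a path witnessing the cut becomes a blocked-channel silence in $r'$. The total number of failures is $|B|+1\le f$, so $r'\in R_Q$. To establish $r'_{i_x}(t_x)=\rnice_{i_x}(t_x)$, I would mimic the induction from the proof of \Cref{thm:f-block-nec}: define $T$ as the set of nodes of $\nG$ unreachable from $\theta_{x-1}^+$ via any path avoiding null messages from $B\cup\{i_{x-1}\}$, so that $\theta_x\in T$ by construction, and show by induction on time that $r'_p(t)=\rnice_p(t)$ for every $\node{p,t}\in T$ with $p\notin B\cup\{i_{x-1}\}$. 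The induction step will exploit that any edge of $\nG$ crossing into $T$ from its complement must, by the definition of $T$, be a null message from $B\cup\{i_{x-1}\}$, and the failure pattern has been arranged precisely to suppress those nulls.

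The principal technical obstacle is the coordinated selection of crash times and blocking sets for the members of $B$: these choices must ensure that each critical null on the cut is replaced by a blocked-channel silence in $r'$ (exploiting the definitional point that a null message requires the channel not to be blocked), while all \emph{actual} messages from $B$ that lie on causal paths into $\node{i_x,t_x}$ are preserved, so that the inductive invariant is not disturbed at earlier times.
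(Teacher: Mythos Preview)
Your overall architecture matches the paper's: part~(1) is exactly right, and for part~(2) you correctly aim to build a run $r'$ in which $a_{x-1}$ is not performed yet $r'_{i_x}(t_x)=\rnice_{i_x}(t_x)$, contradicting $K_{i_x}(\ba_{x-1})$ via \Cref{lem:OR-basic}(2). Two points in the construction of~$r'$ need repair.

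First, the crash time for $i_{x-1}$ is off by one round. In this model an action ``performed at time $t_{x-1}$'' is taken in round $t_{x-1}{+}1$, so crashing $i_{x-1}$ at time $t_{x-1}$ with $Bl(i_{x-1})=\emptyset$ is what suppresses $a_{x-1}$ while still delivering every message $i_{x-1}$ sends at time $t_{x-1}$ (this is what the paper does). Your choice of $t_{x-1}{-}1$ also suppresses the time-$t_{x-1}$ messages; since every $\node{q,t_{x-1}{+}1}$ with $q\ne i_{x-1}$ lies in your set $T$ (no path out of $\theta_{x-1}^+$ can reach a distinct node at the same time step), any actual message $i_{x-1}$ sends to some~$q$ at time $t_{x-1}$ in $\rnice$ then breaks the invariant $r'_q(t_{x-1}{+}1)=\rnice_q(t_{x-1}{+}1)$.

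Second, the obstacle you flag---block the cutting nulls while \emph{preserving} all actual messages from members of $B$ on causal paths into $\theta_x$---is both infeasible in general (a crashed process cannot send later actual messages) and unnecessary. The paper crashes each $b\in B$ other than $i_{x-1}$ at its \emph{critical time} $t_b=t_b(\theta_{x-1}^+,\theta_x)$ with all outgoing channels blocked. The key property is that $\node{b,t_b}$ is the first node on $b$'s timeline that is $(B\cup\{i_{x-1}\})$-null-free reachable from $\theta_{x-1}^+$: hence any non-null edge out of $\node{b,t}$ for $t\ge t_b$ lands at a node \emph{outside} $T$, so failing to deliver it is irrelevant to the invariant; and for $t<t_b$ one has $\node{b,t}\in T$ with $b$ still active in~$r'$, so the inductive hypothesis applies to $b$ itself and $b$ behaves exactly as in $\rnice$ up to time~$t_b$. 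In particular you should \emph{not} exclude the processes of $B$ from the invariant---only those that have already crashed by the current time. With these two adjustments your sketch becomes the paper's proof.
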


Observe that \Cref{cond2} of \Cref{thm:OR-necc-1} implies the existence of an $(f-1)$-resilient message block in which $i_h$ does not send null messages. This requirement is weaker than the  existence of an $f$-resilient message block. 
%\raissa{I.e., the existence of an $f$-resilient message block implies the existence of the structure but the converse is not true.}
While the conditions for \OR\ stated in~\Cref{thm:OR-necc-1}
are necessary, they are not sufficient in general. Indeed, it is unclear what   conditions on $\nG(Q)$ might be sufficient to solve \OR\  for general protocols~$Q$. 
In the Appendix, we present a natural class of protocols for which conditions that are both necessary and sufficient can be stated and proven (see \Cref{thm:OR-necc-2,thm:or-suff}).

\section{Robust Information Transfer}\label{sec:robust-signal}
We now turn to consider protocols that convey information in a ``robust'' way. Namely, they ensure that in every run in which $v_s=1$ and the source process~$s$ does not fail, the destination eventually knows that $v_s=1$. 

\begin{definition}[Robust Information Transfer]
A  protocol~$Q$ is said to solve {\em Robust Information Transfer} between processes~$s$  and~$d$ if, for every run $r$ of $Q$ in which $v_s=1$ and $s$ does not fail, there is a time~$m$ such that  $K_d(v_s=1)$ holds at $(r,m)$.  
\end{definition}

Clearly, a protocol that solves the Robust Information Transfer problem also solves, in particular, \IT\ in its nice run. However, Robust Information Transfer is a strictly harder problem and so, %since it must \blue{However, its requirements are bigger than the ones of the nice run \IT\ there are runs in which $s$ does not fail and starts with $1$ but in which other processes fail. In such runs, $d$ must eventually know that $v_s=1$.} Hence, 
as shown in this section, its solutions require more communication than is allowed by $f$-resilient message blocks.

\begin{theorem}[Robust \IT\ Necessity]\label{thm:rob-nec}
Let $Q$ be a protocol that solves Robust \IT\ between $s$ and $d$. Then, 
%in every run~$r$ of~$\,Q$ with $v_s=1$ in which~$s$ does not crash,  
there exists $m\geq 0$ such that
\begin{itemize}
    \item $\nG(Q)$ contains an actual message sent from~$s$ to $d$ no later than at time~$m-1$, or
    \item  $\nG(Q)$ contains $f+1$ paths from $\theta_s=\node{s,0}$ to $\theta_d=\node{d,m}$ that are disjoint in message senders (except for $s$ and~$d$) such that in at least one of these paths, $s$ does not send null messages. 
    %in at most $f$ of these paths.
\end{itemize}
\end{theorem}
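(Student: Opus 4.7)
The plan is to take $m$ to be the first time at which $K_d(v_s=1)$ holds in the nice run $\rnice(Q)$; such a time exists because $Q$ solves Robust \IT\ and in particular achieves \IT\ in $\rnice$. If $\nG(Q)$ already contains an edge in $\Eactual$ directly from $s$ to $d$ at some time $t\le m-1$, the first bullet holds and we are done, so from here on I would assume no such direct actual message exists and aim to prove the second bullet for this same $m$.

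Applying \Cref{thm:f-block-nec} to $\rnice$, where $\faulty^{\rnice}=\emptyset$, yields an $f$-resilient message block $\Gamma$ from $\theta_s$ to $\theta_d$ in $\nG(Q)$. Instantiating its definition at $B=\{s\}$ immediately produces a path whose first edge from $s$ lies in $\Eactual$: any $\{s\}$-null-free path starting at $s$ cannot begin with a null from $s$. So at least one path with the required ``$s$-sends-actual-first'' property always exists; the content of the theorem is that there must additionally be an entire family of $f+1$ paths, pairwise disjoint in their senders outside $\{s,d\}$, one of which carries this property.

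I would prove this by contradiction, splitting into two sub-cases. If $\nG(Q)$ contains no family of $f+1$ sender-disjoint paths from $\theta_s$ to $\theta_d$, a standard Menger-type cut argument on the sender-process-vertices (outside $\{s,d\}$) produces a set $B^\star$ of at most $f$ processes whose removal disconnects $\theta_s$ from $\theta_d$ in $\nG(Q)$. Let $r^\star$ be the run of $Q$ with $v_s=1$ in which $s$ does not fail but every process in $B^\star$ crashes at time~$0$ sending nothing. By Robust \IT\ there is some $m^\star$ with $(r^\star,m^\star)\sat K_d(v_s=1)$. Following the indistinguishability template of the proof of \Cref{thm:msg-chain}, I would construct a companion run $r'$ with $v_s=0$ and the same failure pattern and prove by induction on time $t\le m^\star$ that $r^\star_d(t)=r'_d(t)$, contradicting $K_d(v_s=1)$ at $(r^\star,m^\star)$. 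The second sub-case, where $f+1$ sender-disjoint paths exist but each of them begins with a null from $s$, is handled symmetrically: I would crash the first-hop neighbors of $s$ lying on any $f$ of those paths so that along any surviving path $s$'s first edge is a null, and then argue that the resulting run is indistinguishable at $d$ from one in which $s$ itself crashes at time~$0$ with $v_s=0$.

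The main obstacle I expect, echoing the subtlety in the proofs of \Cref{thm:msg-chain,thm:f-block-nec}, is that the communication graphs of the modified runs can diverge from $\nG(Q)$: when a process's input or neighborhood changes, it may transmit an actual message where $\nG(Q)$ records a null, potentially opening enhanced-chain routes that did not exist in the nice run. Controlling these newly created edges via a careful time-indexed induction, so as to confine all deviations to the unreachable side of the sender-cut and keep $d$'s local state aligned with the companion run through the alleged moment of knowledge, is where the real work lies. The extra ``$s$ does not send null on at least one path'' clause is precisely what the knowledge-theoretic lower bound of \Cref{cor:KoP-msg-chain} demands, since without it the sole remaining doubt for $d$ would be that $s$ has failed, and that possibility alone is enough to preclude knowledge of $v_s=1$.
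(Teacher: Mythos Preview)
Your high–level decomposition (Menger cut when there are fewer than $f{+}1$ sender-disjoint paths; compare with an ``$s$ crashes'' run when the disjoint paths exist but all carry an $s$-null) matches the paper's cases (i) and (ii). The real difficulty, however, is not the indistinguishability induction you flag at the end; it is a quantifier slip in how you set up the contradiction.

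You fix $m$ once and for all as the first time $K_d(v_s{=}1)$ holds in~$\rnice$, and then in sub-case~1 take a Menger cut $B^\star$ that separates $\theta_s$ from $\theta_d=\node{d,m}$ \emph{in $\nG(Q)$}. After crashing $B^\star$ at time~0 in $r^\star$, you appeal to Robust \IT\ to obtain some time $m^\star$ with $K_d(v_s{=}1)$ at $(r^\star,m^\star)$, and then try to match $r^\star$ with a $v_s{=}0$ companion at time~$m^\star$. But $B^\star$ was only a cut for time~$m$: nothing prevents $\nG(Q)$ from containing a path from $\theta_s$ to $\node{d,m^\star}$ that avoids $B^\star$ entirely when $m^\star>m$, and since $\CG_Q(r^\star)\subseteq_u\nG(Q)$, such a path may well survive in $r^\star$. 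Your induction establishing $r^\star_d(t)=r'_d(t)$ then breaks down past time~$m$, and no contradiction is reached. The paper avoids this by taking the global negation (``for every $m$ both bullets fail'') rather than the negation at a single pre-chosen~$m$; with that stronger hypothesis one can, for each candidate run, pick the cut relative to the time Robust \IT\ supplies for \emph{that} run (or, alternatively, use that there are only finitely many size-$\le f$ subsets of~$\Proc$ to extract a single cut that works at all times and then produce one run in which $d$ never learns).

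Two smaller points on sub-case~2. First, the correct negation of the second bullet is that in every $(f{+}1)$-family of sender-disjoint paths, \emph{every} path contains a null message sent by~$s$ somewhere along it, not that each path ``begins'' with one; a path may start with local edges or an actual send by~$s$ and contain an $s$-null later. Second, ``crash the first-hop neighbors on $f$ of the paths'' is not the right surgery: the paper instead crashes $f{-}1$ processes that cut all the $\{s\}$-null-free paths (so that every surviving path in the resulting run contains an $s$-null), and then compares with a run in which $s$ itself fails at time~0, arguing indistinguishability over the set of nodes from which $\theta_d$ is still reachable in the $s$-crashed run. Your ``first-hop neighbors'' idea does not control the later appearances of $s$-nulls and does not give the needed induction invariant.
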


\begin{proofsketch}
    The claim is proved by way of contradiction assuming there exists no $m$ as described in the Theorem. We then consider different cases according to the way the Theorem's assumptions are violated. For each case, we construct a run $r\in R_Q$ in which $v_s=1$ and $s$ does not fail as well as a corresponding run $r'$ in which there is no enhanced message chain from $\theta_s$ to $\theta_d$ and that is indistinguishable by $d$. By \Cref{thm:msg-chain}, it results that $\neg K_d(v_s=1)$ at $(r,m)$, completing the proof.
\end{proofsketch}
An interesting difference between the necessary conditions for solving the Nice-run \IT\ and the ones for Robust \IT\ is in that for the former, process disjointness is required only for the processes sending null messages, while for the latter it is required for all  processes. This resembles the conditions of \cite{dolev1982byzantine} where in order to solve Byzantine Agreement in the synchronous byzantine model, the communication network must have a connectivity of at least $2f+1$. Thus, the paths from the source process~$s$ must be process disjoint the stronger sense. 
%in terms of all types of disjointness is required This is in order to prevent the scenario where a set $B$ of at most $2f$ processes split two processes (let's say $s$ and $d$). In such a scenario, $f$ processes of $B$ can ``lie'' about $s$'s value and prevent $d$ from knowing it. In a similar way, our condition in \Cref{thm:rob-nec} prevents the scenario in which process $d$ does not know that $v_s=1$ by making $f$ separating processes fail and hence, disconnecting $d$ from $s$.

We now show that the conditions of \Cref{thm:rob-nec} are not only necessary, but also sufficient.

\begin{theorem}[Robust \IT\ Sufficiency]\label{thm:rob-suff}
Let $\CG$ be a communication graph satisfying the conditions of \Cref{thm:rob-nec}.
Then there is a protocol $Q$ with $\nG(Q)=\CG$ that solves the Robust \IT\ problem.
\end{theorem}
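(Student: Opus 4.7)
The plan is to construct $Q$ as a Nice-based Message (\NbM) protocol associated with $\CG$, modified slightly in Case~1. The equality $\nG(Q)=\CG$ follows immediately from the construction; the substantive work is to verify Robust \IT\ in each of the two cases of the hypothesis of \Cref{thm:rob-nec}.

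For Case~1, where $\CG$ contains an actual edge $e$ from $s$ to $d$ at some time $t\le m-1$, I would override the \NbM\ rule on $e$ so that $s$ transmits the raw bit $v_s$ (`$1$' if $v_s=1$ and `$0$' otherwise). Since $v_s=1$ in $\rnice$, the nice-run behavior on $e$ is unchanged, so $\nG(Q)=\CG$ is preserved. In any run $r$ with $v_s=1$ and $s$ non-faulty, $\chan{s,d}$ is not blocked at $(r,t)$ and $s$ transmits `$1$', so $d$ receives `$1$' at time $t+1\le m$; since only executions with $v_s=1$ can produce a `$1$' arriving on $e$, $d$ concludes $K_d(v_s=1)$.

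For Case~2, let $\pi^*$ be the path along which $s$ sends no null messages and let the full set of paths be $\pi_1,\ldots,\pi_{f+1}$. I would take $Q$ to be the \NbM\ protocol for $\CG$. Fix an arbitrary success run $r$, and suppose toward contradiction that some $v_s=0$ run $r'$ satisfies $r_d(m)=r'_d(m)$. The crux is a pair of observations. First, $r$ uses at most $f$ failures and $s$ is not among them, so by the sender-disjointness of the $f+1$ paths at least one path $\sigma$ has all its internal senders correct in $r$; along $\sigma$ the \NbM\ construction delivers $d$ exactly the nice-run messages. Second, in $r'$ the fact that $v_s=0$ forces $K_s\neg\nicerun$ at time $0$, so on every actual edge $s$ transmits `$0$' and on every null edge $s$ either breaks silence with an actual `$0$' (if still active) or is silent because it has failed. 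Applied to $\pi^*$, whose first edge out of $s$ is actual, this means $u_1^*$ in $r'$ either receives `$0$' or receives nothing, and in both subcases $u_1^*$'s local state in $r'$ strictly differs from the nice-run state. Propagating this difference along $\pi^*$, together with the sender-disjoint structure of the other paths, would then show that $r'$ must burn more than $f$ failures in order to make every last-edge channel into $d$ match $r$, contradicting $r_d(m)=r'_d(m)$.

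The main technical obstacle is the Case~2 verification. The delicate point is that in $r$ the source $s$ may itself accumulate $K_s\neg\nicerun$ when failures elsewhere in $r$ starve it of messages it expects in the nice run; this could cause $s$ to transmit the same `$0$' on $\pi^*$'s first edge in both $r$ and $r'$, blurring the distinguishing signal on $\pi^*$. Resolving this requires a joint accounting of the failure patterns in $r$ and $r'$: every failure in $r$ that induces $K_s\neg\nicerun$ is still a failure charged against $r$'s budget of $f$ and leaves one less unit available to compromise other paths in $r$, while any failures $r'$ deploys to absorb the distinguishing signal on $\pi^*$ diminish $r'$'s budget for matching $r$'s observations on the remaining sender-disjoint paths. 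I anticipate the formal argument to mimic the cut construction of the proof of \Cref{thm:msg-chain}, with the cut tracking the set of nodes at which $r$ and $r'$ agree in local state.
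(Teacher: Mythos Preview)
Your Case~1 treatment is fine, but your Case~2 approach has a genuine gap: the \NbM\ protocol simply does not solve Robust \IT, and the ``joint budget accounting'' you sketch cannot rescue it. The obstacle you correctly identify---that in a success run~$r$ the source~$s$ itself may acquire $K_s\neg\nicerun$ from extraneous failures and then send `$0$' everywhere---is fatal, because the run~$r'$ you must rule out is under \emph{no} obligation to spend failures at all.

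Here is a concrete instance. Take $f=1$ and let $\CG$ consist of the actual edges $(\node{q,0},\node{s,1})$, $(\node{s,1},\node{p_1,2})$, $(\node{p_1,2},\node{d,3})$, $(\node{p_2,2},\node{d,3})$ and the null edge $(\node{s,1},\node{p_2,2})$. The two sender-disjoint paths through~$p_1$ and~$p_2$ satisfy the hypothesis of \Cref{thm:rob-nec} with $m=3$, and $s$ sends no null message on the $p_1$-path. Now let~$r$ be the run with $v_s=1$ in which~$q$ crashes at time~$0$ without sending. At time~$1$, $s$ misses $q$'s message, so $K_s\neg\nicerun$; under \NbM, $s$ sends `$0$' to both~$p_1$ and~$p_2$, and they in turn send `$0$' to~$d$. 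In the failure-free run~$r'$ with $v_s=0$, the same `$0$'s propagate to~$d$. Thus $r_d(3)=r'_d(3)$, and since $\CG$ has no later edges, $d$ never learns $v_s=1$ in~$r$. Your budget argument collapses because~$r'$ uses zero failures: there is nothing to ``charge'' against it, and the distinguishing signal on~$\pi^*$ was never present in~$r$ to begin with.

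The paper's fix is exactly the idea your Case~1 override hints at, applied globally: it replaces \NbM\ by a \emph{Robust-based Message} protocol in which a process sends `$0$' (or breaks a null) only when it knows $(v_s\ne 1)\vee(s\text{ is faulty})$, rather than merely $\neg\nicerun$. The point is that in every success run this condition is \emph{false} at every process, so along any path whose non-$s$ senders are correct in~$r'$ the nice-run behaviour is reproduced in~$r$ while in~$r'$ the condition propagates from~$s$; this is what yields $r_d(m)\ne r'_d(m)$. You need to switch the underlying predicate, not refine the counting.
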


In analogy to how  we defined Nice-based Message protocols in \Cref{sec:niceIT}, we define in the Appendix (\Cref{def:RbM}) what we call Robust-based Message protocols ---  protocols in which whether messages are sent and the contents of the messages sent depend on whether a process knows that $(v_s=0 \,\vee\, \mbox{$s$ failed})$.
Taken together,  \Cref{thm:rob-nec,thm:rob-suff} provide a tight characterization of the communication patterns of protocols solving Robust Information Transfer. 

\section{Conclusions and Future Work}
\label{sec:Conclusions}
Every model of distributed computing provides particular means by which processes can communicate, and these can have a profound impact on the problems that can be solved in the model and on the form that protocols solving them will have. Synchronous systems with global clocks, for example, allow nontrivial use of null messages, which are completely meaningless in the asynchronous model, for example. Since a message not sent can be informative only if there are alternative conditions under which it would be sent, null messages are especially useful as a means of shifting communication costs to optimize for the common case. As illustrated in \Cref{fig:ONet,fig:ONull} an demonstrated in the Atomic Commitment protocols of \cite{silence}, shifting these costs in a careful way can result in significant savings. 

By refining the definitions of null messages, we were able to investigate fundamental aspects of information transfer and coordination in synchronous systems with crash failures. In particular, we obtained characterizations of protocols that solve information transfer and coordination problems in nice, failure-free executions. 
A central tool in our analysis is the  notion of an $f$-resilient message block, which is significantly more refined than the silent choirs of \cite{silence}. Indeed, while constructing  silent choirs is a necessary condition on protocols solving information transfer in nice runs, constructing $f$-resilient blocks is both necessary {\em and} sufficient.  For the Ordered Response coordination problem, where liveness needs to be guaranteed in nice runs, we obtain a condition based on resilient message blocks which, again, is both necessary and sufficient. No similar analysis of Ordered Response has been attempted in the literature. 

There are various synchronous models of current interest in the literature. The  beeping models \cite{CASTEIGTS201920} or  dynamic networks \cite{dynamicNetworks, KFBDynamic}, for example, are intrinsically different from our model. Our results do not directly apply there, but perhaps an analogous analysis will provide insights into such models as well. 

Finally, our work can be extended in several natural ways. One would be to consider  additional failure models. (An early work using the notion of silent choirs to optimize Consensus in the Byzantine setting is~\cite{silence2}.)
Another extension would be to consider more general models of communication. For example, we can replace the assumption that messages take exactly~1 round to be delivered by assuming longer bounds on transmission times. 
 As shown in \cite{ben2014beyond}, assuming an upper  bound on transmission times gives rise to a much richer structure, even in the absence of failures. Doing so in a fault-prone setting such as ours would necessarily lead to generalizations of message chains that extend both resilient message blocks and the centipedes of~\cite{ben2014beyond}. This is a promising direction, which is likely to be both challenging and rewarding.

%Investigating these directions would constitute a further step in understanding the role of silence in synchronous models.}

%schemes. Indeed, our work focused on a model in which processes are prone to benign failures, namely, they may crash and once crashed they stay inactive during the rest of the run. Other failure models that may be considered are the ones of (1) crash failure with recovery, (2) byzantine failures and (3) a mix of byzantine and benign failures. 
%The Byzantine case is especially relevant in the context of blockchain in which processes behave selfishly.
    
%A completely different direction worth pursuing is seeking real-world applications in which costly paths of communication can be traded by inexpensive ones, by judicious use of null messages. These may arise in models that do not quite fit the ones that have been studied, and developing the theory supporting efficient practice in such cases would be most worthwhile. 

%\bibliographystyle{ACM-Reference-Format}
\bibliography{references(2),z1,z2}

\begin{thebibliography}{}

\end{thebibliography}


\begin{thebibliography}{10}

\bibitem{amdur1992message}
Eugene~S. Amdur, Samuel~M. Weber, and Vassos Hadzilacos.
\newblock On the message complexity of binary byzantine agreement under crash
  failures.
\newblock {\em Distributed Computing}, 5(4):175--186, 1992.

\bibitem{BzM2011}
Ido Ben-Zvi and Yoram Moses.
\newblock On interactive knowledge with bounded communication.
\newblock {\em Journal of Applied Non-Classical Logics}, 21(3-4):323--354,
  2011.

\bibitem{ben2014beyond}
Ido Ben-Zvi and Yoram Moses.
\newblock Beyond lamport's happened-before: On time bounds and the ordering of
  events in distributed systems.
\newblock {\em Journal of the ACM (JACM)}, 61(2):1--26, 2014.

\bibitem{CASTEIGTS201920}
A.~Casteigts, Y.~Métivier, J.M. Robson, and A.~Zemmari.
\newblock Counting in one-hop beeping networks.
\newblock {\em Theoretical Computer Science}, 780:20--28, 2019.
\newblock URL:
  \url{https://www.sciencedirect.com/science/article/pii/S030439751930115X},
  \href {https://doi.org/https://doi.org/10.1016/j.tcs.2019.02.009}
  {\path{doi:https://doi.org/10.1016/j.tcs.2019.02.009}}.

\bibitem{ChM}
K.~M. Chandy and J.~Misra.
\newblock How processes learn.
\newblock {\em Distributed Computing}, 1(1):40--52, 1986.

\bibitem{dynamicNetworks}
Carole Delporte-Gallet, Hugues Fauconnier, and Sergio Rajsbaum.
\newblock Communication complexity of wait-free computability in dynamic
  networks.
\newblock In Andrea~Werneck Richa and Christian Scheideler, editors, {\em
  Structural Information and Communication Complexity}, pages 291--309, Cham,
  2020. Springer International Publishing.

\bibitem{dolev1982byzantine}
Danny Dolev.
\newblock The byzantine generals strike again.
\newblock {\em Journal of algorithms}, 3(1):14--30, 1982.

\bibitem{fhmv}
Ronald Fagin, Joseph~Y.\ Halpern, Yoram Moses, and Moshe~Y. Vardi.
\newblock {\em Reasoning about Knowledge}.
\newblock MIT Press, Cambridge, Mass., 2003.

\bibitem{silence}
Guy Goren and Yoram Moses.
\newblock Silence.
\newblock {\em J. ACM}, 67:3:1–3:26, 2020.
\newblock \href {https://doi.org/10.1145/3377883} {\path{doi:10.1145/3377883}}.

\bibitem{silence2}
Guy Goren and Yoram Moses.
\newblock Optimistically tuning synchronous byzantine consensus: another win
  for null messages.
\newblock {\em Distributed Comput.}, 34(5):395--410, 2021.
\newblock \href {https://doi.org/10.1007/s00446-021-00393-8}
  {\path{doi:10.1007/s00446-021-00393-8}}.

\bibitem{GW17}
Rachid Guerraoui and Jingjing Wang.
\newblock How fast can a distributed transaction commit?
\newblock In Emanuel Sallinger, Jan~Van den Bussche, and Floris Geerts,
  editors, {\em Proceedings of the 36th {ACM} {SIGMOD-SIGACT-SIGAI} Symposium
  on Principles of Database Systems, {PODS} 2017, Chicago, IL, USA, May 14-19,
  2017}, pages 107--122. {ACM}, 2017.
\newblock \href {https://doi.org/10.1145/3034786.3034799}
  {\path{doi:10.1145/3034786.3034799}}.

\bibitem{hadzilacos1993message}
Vassos Hadzilacos and Joseph~Y. Halpern.
\newblock Message-optimal protocols for byzantine agreement.
\newblock {\em Mathematical Systems Theory}, 26(1):41--102, 1993.

\bibitem{FastSlow}
Alex Kogan and Erez Petrank.
\newblock A methodology for creating fast wait-free data structures.
\newblock In {\em ACM SIGPLAN Notices}, volume~47, pages 141--150. ACM, 2012.

\bibitem{KFBDynamic}
Fabian Kuhn, Yoram Moses, and Rotem Oshman.
\newblock Coordinated consensus in dynamic networks.
\newblock In {\em Proceedings of the 30th Annual ACM SIGACT-SIGOPS Symposium on
  Principles of Distributed Computing}, PODC '11, page 1–10, New York, NY,
  USA, 2011. Association for Computing Machinery.
\newblock \href {https://doi.org/10.1145/1993806.1993808}
  {\path{doi:10.1145/1993806.1993808}}.

\bibitem{Lamclocks}
L.~Lamport.
\newblock Time, clocks, and the ordering of events in a distributed system.
\newblock {\em Communications of the ACM}, 21(7):558--565, 1978.

\bibitem{lamp}
Leslie Lamport.
\newblock Using time instead of timeout for fault-tolerant distributed systems.
\newblock {\em ACM Trans. Program. Lang. Syst.}, 6:254--280, 1984.
\newblock \href {https://doi.org/10.1145/2993.2994}
  {\path{doi:10.1145/2993.2994}}.

\bibitem{fairLedger}
Kfir Lev{-}Ari, Alexander Spiegelman, Idit Keidar, and Dahlia Malkhi.
\newblock Fairledger: {A} fair blockchain protocol for financial institutions.
\newblock In Pascal Felber, Roy Friedman, Seth Gilbert, and Avery Miller,
  editors, {\em 23rd International Conference on Principles of Distributed
  Systems, {OPODIS} 2019, December 17-19, 2019, Neuch{\^{a}}tel, Switzerland},
  volume 153 of {\em LIPIcs}, pages 4:1--4:17. Schloss Dagstuhl -
  Leibniz-Zentrum f{\"{u}}r Informatik, 2019.
\newblock \href {https://doi.org/10.4230/LIPIcs.OPODIS.2019.4}
  {\path{doi:10.4230/LIPIcs.OPODIS.2019.4}}.

\bibitem{liskov1993practical}
Barbara Liskov.
\newblock Practical uses of synchronized clocks in distributed systems.
\newblock {\em Distributed Computing}, 6(4):211--219, 1993.

\bibitem{Moses16}
Yoram Moses.
\newblock Relating knowledge and coordinated action: The knowledge of
  preconditions principle.
\newblock In {\em Proceedings of {TARK},}, pages 231--245, 2015.
\newblock URL: \url{https://doi.org/10.48550/arXiv.1606.07525}.

\end{thebibliography}

\newpage
\section*{Appendix}\label{sec:appendix}
\textbf{Additional content about Failure Patterns defined in \Cref{sec:model}}

Our technical analysis is facilitated by defining a strictness ordering among failure patterns, and associating a minimal failure pattern with each run.

Notice that there may be several failure patterns that are compatible with a given run. In particular, blocking a channel along which no message should be sent does not affect processes' local states.
\begin{definition}[Failure patterns comparison]
Let $FP=\{\langle q_i, t_{i}, Bl(q_i)\rangle\}_{i\le k}$ and $FP'=\{\langle q_j', t_{j}', Bl'(h'_j)\rangle\}_{j\le k'}$ be two failure patterns with $k,k'\leq f$, and denote by $F$ and $F'$ the sets of processes that appear in these patterns, respectively. We say that $FP'$ is {\em harsher}
 than $FP$ (denoted by $FP'\leq FP$) if $F\subseteq F'$ and for every $q_i\in F$:
\begin{itemize}
    \item  $t_i'< t_i$, or
    \item $t_i=t_i'$ and $Bl(q_i)\subseteq Bl'(q_i)$
\end{itemize}
\end{definition}
Note that processes that don't fail in $r'$ might fail in $r$. We now define what we call a {\em minimal} failure pattern wrt.\  a run.
\begin{definition}
%[Min compatible FP]
Let $r$ be a run and let $FP$ be a failure pattern compatible with~$r$. We say that $FP$ is minimal wrt.~$r$ if for every failure pattern $FP'$ that is compatible with $r$ and such that $FP$ is harsher than $FP'$ (i.e., $FP\leq FP'$), it holds that $FP'=FP$.
\end{definition}
Clearly, for a given run $r$ there is only one minimal compatible failure pattern. We denote it by $FP(r)$. In the proofs appearing in this section,  we will be interested in comparing communication graphs of two runs of the same protocol. We compare the communication graphs regardless of the edges category. Formally:
\begin{definition}[unlabeled-edge subgraph]
Let $\CG=(\nodes,E)$ and $\CG'=(\nodes,E')$ be two communication graphs. We say that $\CG$ is an ``unlabeled-edge'' subgraph of $\CG'$, and write $\CG\subseteq_{u}\CG'$ if for every edge $e\in E$ it is the case that $e\in E'$. (Although $e$ can be an actual message edge in one graph and a null-message edge in the other.) In the rest of paper, we often write ``subgraph'' to stand in for  ``unlabeled-edge subgraph''. 
\end{definition}

We can now show: 

\begin{lemma}\label{lem:subgraph-gen}
        If $FP(r')\leq FP(r)$ for two runs of a protocol~$Q$, then $\CG_Q(r')\subseteq_{u} \CG_Q(r)$.
\end{lemma}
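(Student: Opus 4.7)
I would verify edge by edge that every edge of $\CG_Q(r')$ also appears in $\CG_Q(r)$. Since both graphs share the node set $\mathbb{V}$ and the local-edge set $\Eloc$, only communication edges need to be considered, so I fix an arbitrary edge $e = (\node{i,t},\node{j,t+1}) \in \Eactual(r') \cup \Enull(r')$ and aim to show $e \in \Eactual(r) \cup \Enull(r)$.

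The crux is showing that $\chan{i,j}$ is not blocked at $(r,t)$. Suppose, for contradiction, it were blocked; then $i$ fails in $r$ at some time $t_i \le t$ with $j \in Bl(i)$. Since $FP(r') \le FP(r)$, process $i$ must also fail in $r'$ at some $t_i' \le t_i$, with either (a)~$t_i' = t_i$ and $Bl(i) \subseteq Bl'(i)$, or (b)~$t_i' < t_i$. Case~(a) immediately gives $j \in Bl'(i)$, so $\chan{i,j}$ is blocked at $(r',t)$, contradicting the existence of the outgoing edge $e$ at $\node{i,t}$ in $\CG_Q(r')$ (which requires $\chan{i,j}$ to be unblocked at $(r',t)$). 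Case~(b) gives $t_i'+1 \le t_i \le t$, so $i$ is completely inactive at time $t$ in $r'$ and performs no send action, once again contradicting $e \in \CG_Q(r')$. Hence $\chan{i,j}$ is unblocked at $(r,t)$. The witness run required in the third clause of \Cref{def:null} is then supplied directly: if $e \in \Eactual(r')$ take $r^* = r'$, while if $e \in \Enull(r')$ take the witness already justifying that null message at $(r',t)$. Consequently, in $\CG_Q(r)$: if $i$ sends an actual message to $j$ at time $t$, then $e \in \Eactual(r)$; otherwise the three clauses of \Cref{def:null} hold at $(r,t)$ and $e \in \Enull(r)$. In either case $e \in \CG_Q(r)$.

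The main technical subtlety lies in case~(b), where the step relies on a crashed (completely inactive) process producing no outgoing edge at later times, including null edges. This matches the convention --- implicit in the paper's treatment --- that an inactive process performs no send action of any kind. Beyond this point the lemma reduces to a clean case analysis over the two ways a strictly harsher failure pattern can arise, combined with a direct application of \Cref{def:null}, and requires no induction on time.
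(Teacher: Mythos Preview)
Your proof is correct and follows essentially the same approach as the paper's: both argue edge by edge, observe that $\Eloc$ is shared, and for each communication edge $(\node{i,t},\node{j,t+1})$ of $\CG_Q(r')$ use the harshness relation $FP(r')\le FP(r)$ to conclude that $\chan{i,j}$ is unblocked at $(r,t)$, so the edge survives in $\CG_Q(r)$ as either an actual or a null edge. The paper treats $\Eactual(r')$ and $\Enull(r')$ separately and asserts the ``channel unblocked in $r$'' step without justification; you unify the two cases and supply the missing contradiction argument (your cases (a) and (b)), and you make explicit the witness run needed for \Cref{def:null}. Your flagging of the convention in case~(b)---that an inactive process contributes no outgoing communication edge---is apt, since the paper's definitions leave this slightly implicit.
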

\begin{proof}
Both graphs have the same set of nodes.
We now prove that for each $e\in\Eloc\cup\Eactual(r')\cup\Enull(r')$, it holds that $e\in\Eloc\cup\Eactual(r)\cup\Enull(r)$.
\begin{itemize}
    \item $\Eloc(r)=\Eloc(r')$.
    \item Let $e\triangleq(\node{p,t},\node{p',t+1})\in\Eactual(r')$, i.e., $p$ sends $p'$ an actual message at $(r',t)$. If $p$ sends $p'$ an actual message at $(r,t)$ then $e\in\Eactual(r)$. Otherwise, since $FP(r')\leq FP(r)$, it holds that the channel $\chan{p,p'}$ is not blocked at $(r,t)$. Hence, by the definitions of communication graphs and of null messages, we have that $e\in\Enull(r)$. So $e\in E$.
    \item Finally, let $e\triangleq(\node{p,t},\node{p',t+1})\in\Enull(r')$. In particular, the channel $\chan{p,p'}$ is not blocked at $(r',t)$. Hence by null messages definition and the fact that $FP(r)\leq FP(r')$ it holds that $e\in\Eactual(r)\cup\Enull(r)$.
\end{itemize}
\end{proof}
{\bf Proof of \Cref{thm:f-block-nec}}

In the proof of \Cref{thm:f-block-nec}, we will construct a run $r'$ that is similar to $r$ but in which we make fail additional processes. In order to ensure that these additional failures are not noticed by $d$ (and hence to ensure that $d$ does not distinguish $r$ from $r'$), we make the processes fail at a specific {\em critical} time that we define as follows:

\begin{definition}[Critical Time]
Let~$r$ be a run of $Q$, let $B$ be a set of processes and let~$p$ be a process. 
For every pair $\theta$ and $\theta'$ of nodes of $\CG(Q)$, the {\em critical time} $t_p=t_p(\theta,\theta')$ wrt.\ $(\CG_Q(r),B)$ is defined to be the minimal time $m_p$ such that  $\CG_Q(r)$ contains a $\Bnullfree$ path from~$\theta$ to $\node{p,m_p}$ as well as a path from $\node{p,m_p}$ to $\theta'$. If no such time $m_p$ exists, then $t_p=\infty$.
\end{definition}
Informally, the critical time of a process $p$ represents the first time at which $p$ can learn about an event local to $s$ and may be able to inform $d$ about this event. Making relevant processes fail at their critical times ensures that $d$ does not notice these failures and hence that $d$ does not distinguish the constructed run~$r'$ from the nice run.
We can now prove \Cref{thm:f-block-nec}:

\begin{proof}
Assume by way of contradiction that  no such 
block exists in $\CG_Q(r)$.
%an $(f-\faulty^r)$-resilient message block. 
I.e., there exists a set~$B$ such that $|B\cup \faulty^r|\leq f$ and every path from  $\theta_s \text{ to } \theta_d$ in $\CG_Q(r)$ contains a null message from a process in $B$. Let $B$ be a minimal set (by set inclusion) with this property.
Define the set $T_B$ to be:
\[T_B\triangleq\{\node{p,t}\in\mathbb{V}:\text{There is no }\Bnullfree \mbox{~path from }\theta_s  \mbox{ to } \node{p,t}\mbox{ in }\CG_Q(r)\}\]

Notice that our assumption about $\theta_d$ implies that  $\theta_d\in T_B$. Moreover, observe that if $\node{i,t}\in T_B$, then $\node{i,t'}\in T_B$ for all earlier times $0\le t'<t$. 

We show that there exists a run $r'$ of~$Q$ such that $r'_{d}(m)=r_{d}(m)$ and there is no enhanced message chain from $\theta_s$ to $\theta_d$ in $r'$. This will  contradict the fact that~$K_{d}(\theta_s\rightsquigarrow\theta_d)$ holds at time~$m$ in~$r$. We construct $r'$ as follows: 
The initial global state is $r'(0)=r(0)$.
Each process~$b\in B$ crashes in~$r'$ at its critical time~$t_b\triangleq t_b(\theta_s,\theta_d)$ wrt. $(\CG_Q(r),B)$ without sending any messages from time $t_b$ on. Moreover, every process in $\faulty^r\backslash B$ crashes in precisely the same manner in $r'$ as it does in $r$.  
By definition, the critical time of a process $p\in\faulty^r$ is necessarily smaller or equal to the actual time at which $p$ fails in $r$.
We hence have that $FP(r')\leq FP(r)$. Clearly, there is no path from $\theta_s$ to $\theta_d$ in $\CG_Q(r')$. Notice that by minimality of $B$, each process $p\in B$ has a finite critical time $t_p=t_p(\theta_s,\theta_d)$ wrt.\ $(\CG_Q(r),B)$.
We now prove by induction on~$t$ that for all $\node{i,t}\in T_B$, if~$i$ has not crashed by time~$t$ in~$r'$, then~$r_{i}(t)=r'_{i}(t)$.

\uline{Base}:~$t=0$. By assumption, $r'(0)=r(0)$.
 Thus, $r'_{i}(0)=r_{i}(0)$ for every process~$i$ and in particular for those satisfying $\node{i,0}\in T_B$. 

\uline{Step}: Let~$t>0$ and assume that the claim holds for all nodes $\node{l,t'}$ with $t'<t$.
Fix a node~$\node{i,t}\in T_B$. Clearly,  $\node{i,t-1} \in T_B$, and so by the inductive hypothesis $r_{i}(t-1)=r'_{i}(t-1)$. 
To establish our claim regarding $\node{i,t}$, it suffices to show that~$i$ receives exactly the same messages at time~$t$ in both runs. 
Since messages are delivered in one time step in our model, the only messages that~$i$ can receive at time~$t$  are  ones sent at time~$t-1$. Hence, we reason by cases, showing that  every process~$z\ne i$ sends~$i$ the same messages at time~$t-1$ in both runs. 
\begin{itemize}
\item Suppose that $\node{z,t-1}\in T_B$. 
\begin{itemize}

    \item If $z\in \faulty^r \backslash B$ then it is active at time~$t-1$ in~$r'$ iff it is active at this time in $r$. We have by the inductive assumption that it has the same local state  in $r$. Since~$Q$ is deterministic, $z$ sends~$i$ a message at time~$t-1$ in~$r'$ iff it does so in~$r$. Moreover, if it sends a message, it sends the same message in both cases. 
    
    \item We show that if $z\in B$, then the channel $\chan{z,i}$ is not blocked at time $t-1$ in $r'$. Assume by way of contradiction that $\chan{z,i}$ is blocked at time $t-1$. This means that $z$  has failed in~$r'$ by time $t-1$. %Thus, $z$ clearly does not send~$i$ a message at $\node{z,t-1}$ in~$r'$. We now show that then $z$ does not send~$i$ a message there in~$r$ as well.
    Since $z\in B$ and $z$ has failed in $r'$ by time $t-1$, we have that $t_z\le t-1$. By definition of~$z$'s critical time $t_z$, there is a $\Bnullfree$ path $\pi$ from $\theta_s$ to $\node{z,t_z}$ in $\CG_Q(r)$. There also is a path in $\CG_Q(r)$ from $\node{z,t_z}$ to $\node{z,t-1}$ consisting of locality edges. 
    Together, these two paths form a $\Bnullfree$ path from $\theta_s$ to $\node{z,t-1}$ in $\CG_Q(r)$, contradicting the assumption that $\node{z,t-1}\in T_B$.
    
   \item Assume that $z\in B$ and $\chan{z,i}$ is not blocked at time $t-1$ in $r'$. Then, as in the previous case, the inductive assumption and the fact that~$Q$ is deterministic imply that exactly the same communication occurs between $\node{z,t-1}$ and $\node{i,t}$ in both runs.  

   \item Finally, assume that $z\notin B\cup\faulty^r$. Then $z$ is active at time~$t-1$ in both~$r'$ and $r$. We have by the inductive assumption that it has the same local state  in $r$. Since~$Q$ is deterministic, $z$ sends~$i$ a message at time~$t-1$ in~$r'$ iff it does so in~$r$. Moreover, if it sends a message, it sends the same message in both cases.
\end{itemize}

\item Now suppose that~$\node{z,t-1} \notin T_B$, i.e., there is a $\Bnullfree$ path from $\theta_s$ to $\node{z,t-1}$ in $CG_Q(r)$. Since $\node{i,t}\in T_B$ we have that $z$ does not send a message to $i$ at time $t-1$ in $r$.
There is no edge from $\node{z,t-1}$ to $\node{i,t}$ in $\CG_Q(r)$. Since $FP(r')\leq FP(r)$, it holds by \Cref{lem:subgraph-gen} that $\CG_Q(r')\subseteq_{u} \CG_Q(r)$ and hence, there is no edge from $\node{z,t-1}$ to $\node{i,t}$ in $\CG_Q(r')$ either. Meaning that $i$ does not receive a message from $z$ neither at $(r,t)$ nor at $(r',t)$.
\end{itemize}

Since $r_{i}(t-1)=r'_{i}(t-1)$ process~$i$ performs the same actions at time~$t-1$ in both runs. Since, in addition, $i$ receives exactly the same messages at time~$t$ in~$r'$ as it does in~$\rnice$ as we have shown, it follows that $r_i(t)=r'_i(t)$. 

The inductive argument above showed that, for all processes~$i$ and all times~$t\le m$, if $i$ is active at time~$t$ and  $\node{i,t}\in T_B$, then  $r_i(t)=r'_i(t)$.
Since, $\theta_d\in T_B$ by assumption, it follows that, in particular, 
$r_{d}(m)=r'_{d}(m)$. 
Since there is no enhanced message chain from $\theta_s$ to $\theta_d$ in $r'$, we obtain that $\neg K_{d}(\theta_s\rightsquigarrow\theta_d)$ at time~$m$ in~$r$ by the \Cref{def:know} of the knowledge operator. This contradicts the assumption that $K_{d}(\theta_s\rightsquigarrow\theta_d)$ holds at time~$m$ in~$r$, completing the proof. 
\end{proof}
{\bf Proof of \Cref{lem:strongerthansilentchoir}}
\begin{proof}
Denote by $\faulty^r$ the set of faulty processes in $r$, and assume that the conditions of the Silent Choir Theorem do not hold. I.e.,  neither a silent choir nor an actual message chain from $\theta$ to $\theta'$ exist in $r$. Let $\theta'=\node{q,m}$. Let $S$ be the set of processes such that for each $p\in S$ there is an actual message chain from $\theta$ to $\node{p,m-1}$. Since there is no silent choir, the following holds: $|S\cup \faulty^r|\leq f$.

Let $B\triangleq S\cup \faulty^r$ and let $\pi$ be a path from $\theta$ to $\theta'$ in $CG_Q(r)$. Since there is no actual message chain from $\theta$ to $\theta'$ in $CG_Q(r)$ we get that %the last edge of $\pi$, that we denote $(\node{p,m-1},\theta')$ 
there is an edge from a process of $B$ in $\pi$ that is in $\Enull$, i.e., corresponds to a null message sent by a process in $B$. This holds for every path from $\theta$ to $\theta'$. Hence there exists a set of processes $B$ such that $|B\cup\faulty^r|\leq f$ and such that there is no $\Bnullfree$ path from $\theta$ to $\theta'$ in $CG_Q(r)$, i.e., the conditions of \Cref{thm:f-block-nec} do not hold, completing the proof. 
\end{proof}
{\bf Proof of \Cref{thm:OSN tight-suff}}
\begin{proof}
The assumptions guarantee that there will always be at least one path from~$\theta_s$ to~$\theta_d$ in $\CG$ along which no ``silent'' process fails. Let $Q'$ be an \NbM \ protocol such that $\nG(Q')=\CG$. We show by induction that in all runs  in which~$v_s=0$ each process along the path will detect that the run is not nice. In particular, $j$ will be able to distinguish the run from the nice one by time~$m$. It follows that $K_{d}(v_s=1)$ holds in~$\rnice$ at time~$m$. 

Let $r'$ be a run in which $v_s=0$ and denote by $B$ the set of processes that fail in this run. Clearly, $|B|\leq f$. Let~$\pi$ be a $\Bnullfree$ path in $\nG(Q')$, which is guaranteed to exist by the assumption. Let $r'$ be a run in which $v_s=0$. We now prove by induction on time that for each node $\node{p,t}$ in~$\pi$ it holds that~$K_p(\neg \nicerun)$ holds at $(r',t)$. . 

\uline{Base:}~$t=0$. In this case, $p=s$. Since $v_s$ appears in $s$'s local state and its value differs to its value in the nice run, $K_{s}(\neg \nicerun)$ holds at time 0.

\uline{Step:}~$t>0$. We consider the nodes~$\node{q,t-1}$ and~$\node{p,t}$ in~$\pi$. By the induction hypothesis~$K_q(\neg \nicerun)$ holds at $t-1$ in $r'$. We now reason by cases according to the class of the edge~$(\node{q,t-1},\node{p,t})$ in~$\nG(Q')$.
\begin{itemize}
    \item Case 1: $\left(\node{q,t-1},\node{p,t}\right)\in \Eloc$ then $p=q$ and since the fact $\neg\nicerun$ is a stable property we have by the induction hypothesis that $K_p(\neg \nicerun)$ holds at $(r',t)$.
    \item Case 2:~$\left(\node{q,t-1},\node{p,t}\right)\in \Eactual(\rnice)$:
    \begin{itemize}
        \item Case 2a: in the run~$r'$ process~$q$ does not send $p$ a message, then~$p$ detects that the run is not~$\rnice$ (in which, by assumption, it would receive a message from~$q$). 
        \item Case 2b: $q$ does send a message to~$p$ in~$r'$ then, by the induction assumption and the fact that $q$ sends~$0$ if $K_q(\neg\nicerun)$ it follows that~$p$ receives a different message in $r'$ and in~$\rnice$, and so $K_p(\neg\nicerun)$ holds at time~$t$.
    \end{itemize}
    \item Case 3: ~$\left(\node{q,t-1},\node{p,t}\right)\in \Enull(\rnice)$ we have by the choice of~$\pi$ that~$q$ does not fail in $r'$ and by the induction assumption~$K_q(\neg \nicerun)$ holds at time~$t-1$. Recall that, by assumption, in~$Q'$ process~$q$ can send a null message only in case $\neg(K_q\neg\nicerun)$. Since, by the inductive assumption on time~$t-1$ this is not the case, $q$ must send~$p$ a `0'-message. Since such messages are never sent in $\rnice$, we again conclude that~$K_p(\neg \nicerun)$ holds at time~$t$ in~$r'$, as desired.
\end{itemize}
We have shown that for all runs~$r'$ in which $v_s\ne 1$ it is the case that $r'_d(m)\ne \rnice_d(m)$. 
Consequently, $v_s=1$ for all runs~$r$ such that $r_d(m)=\rnice_d(m)$ and so, by  \Cref{def:know}, we obtain that  $K_{d}(v_s=1)$ holds at $(\rnice,m)$, as claimed. 
%To obtain a solution for \IT, we can have~$j$ set $d_j\gets 1$ at time~$m$ if $\neg K_d\neg\nicerun$. 
\end{proof}
{\bf Proof of \Cref{thm:rob-nec}}
\begin{proof} Assume by way of contradiction that the claim is false, and let $Q$ be a protocol for which $\nG(Q)$ does not satisfy the conditions. Recall that in $Q$'s nice run $\rnice$ both $v_s=1$ and process~$s$ does not fail.  
Let $m\geq 0$. First, since $s$ does not send a real message to $d$ by time $m-1$ in $\rnice$, it holds that, in $\rnice$, every enhanced message chain only containing messages sent by $s$ contains a null message sent by $s$.
In addition, by assumption there are at most $f-1$ other paths disjoint in message senders that do not contain null messages from $s$. % (and hence by the contradiction assumption, $f\geq 1$). 
As in the proof of \Cref{thm:f-block-nec}, we can construct a run $r$ in which $s$ and at most $f-1$ other processes fail in a way that there is no enhanced message chain from $\theta_s=\node{s,0}$ to $\theta_d=\node{d,m}$ in $r$ and $r_d(m)=\rnice_d(m)$. By \Cref{thm:msg-chain},  $K_d(v_s=1)$ does not hold at $(\rnice,m)$ and hence, $Q$ does not solve Robust \IT. Otherwise, we have that every enhanced message chain (in particular every actual message chain) from $\theta_s$ to $\theta_d$ contains  messages sent by some process different than $s$, and in addition:  either (i) there is no set of~$f+1$ paths from $\theta_s$ to $\theta_d$ in $\nG(Q)$ that are disjoint in messages senders or (ii) for every set of paths in $\nG(Q)$ of size $f+1$ between $\theta_s$ and $\theta_d$ that are disjoint in messages senders, process $s$ sends a null message in (at least) two of these paths. Let $\Pi$ be the set of paths from $\theta_s$ to $\theta_d$ in $\nG(Q)$.
\begin{itemize}
    \item Clearly, in case (i) there exists a set $B$ of~$f$ processes that such that $s\notin B$ and such that for every $\pi\in\Pi$ there is a process $p\in B$ that sends a message (may be actual or null) along $\pi$. We construct a run $r$ in which $v_s=1$ and each process of $B$ fails at the first time it is reached by an enhanced message chain from $\theta_s$ in $\nG(Q)$. When failing, all the channels originating from the process are blocked at the time it fails. In particular, $r$ is a run in which $v_s=1$ and $s$ does not fail. By the assumption on $B$ and by applying \Cref{lem:subgraph-gen} on $\CG_Q(r)$ and $\nG(Q)$, it results that there is no enhanced message chain from $\theta_s$ to $\theta_d$ in $r$. Hence, by \Cref{thm:msg-chain}, $K_d(v_s=1)$ does not hold at $(r,m)$.

    \item Now for case (ii) let $r$ be a run in which $v_s=1$, process~$s$ does not fail, and $f-1$ other processes fail in such a way  that every remaining path from $\theta_s$ to $\theta_d$ in $\CG_Q(r)$ contains a null message sent by $s$. This run is a legal run of $Q$ since for every set of $f+1$ paths between $\theta_s$ and $\theta_d$ there are at least two paths that contain null messages from $s$ and applying \Cref{lem:subgraph-gen} on $\CG_Q(r)$ and $\nG(Q)$ ensures that $\CG_Q(r)$ contains no additional path. We define the run $r'$ to be the run in which $s$ fails at time $0$ , all the channels originating from $s$ are blocked at time $0$ and the remaining processes fail according to the same failure pattern as in~$r$.
Hence, $FP(r')$ is harsher than $FP(r)$, i.e., $FP(r')\leq FP(r)$ and by \Cref{lem:subgraph-gen} it is the case that $\CG_Q(r')\subseteq_{u} \CG_Q(r)$.
We now show by induction on time that for each node $\node{p,t}$ from which there is a path to $\theta_d$ in $\CG_Q(r')$ it is the case that $r_p(t)=r'_p(t)$.
Define the set $T$ to be: \[T\triangleq\{\node{p,t}:\node{p,t}\rightsquigarrow\theta_d\text{ in } r'\}\]

\uline{Base:} $t=0$: Each process has the same local state in both $r$ and $r'$ so this holds in particular for nodes $\node{p,0}\in T$.

\uline{Step:} $t>0$. Let $\node{p,t}\in T$ and let $\node{z,t-1}$ nodes of $\CG_Q(r')$. We reason by cases.
\begin{itemize}
    \item Case 1: $\node{z,t-1}\in T$ then  $z\neq s$. (Because $z$ being equal to $s$ would lead to a contradiction to the fact that there is no path from $\theta_s$ to $\theta_d$ in $\CG_Q(r')$). By the inductive hypothesis we have that $z$ has the same local state at $(r,t-1)$ and $(r',t-1)$. Now, depending on the class of edge $(\node{z,t-1},\node{p,t})$ we have:
    \begin{itemize}
        \item Case 1a: $(\node{z,t-1},\node{p,t})\in\Eactual$. Since a process that fails in $r$ fails in the same manner in $r'$, it holds that $z$ sends $p$ the same message at both $(r,t-1)$ and $(r',t-1)$.
        \item  Case 1b: $(\node{z,t-1},\node{p,t})\in\Enull$. Recall that $r_{z}(t-1)=r'_{z}(t-1)$. Hence, $p$  receives a message from $z$ neither at $(r,t)$ nor at $(r',t)$.
    \end{itemize}
    \item Case 2: $\node{z,t-1}\notin T$. This means that there is no edge from $\node{z,t-1}$ to $\node{p,t}$ in $\CG_Q(r')$. In particular, no message is sent by $z$ to $p$ at $(r',t-1)$. We show that in $r$, no such message is sent either. Now, depending on if $z$ is equal to $s$ we have:
    \begin{itemize}
        \item Case 2a: $z\neq s$: Since there is no edge from $\node{z,t-1}$ to $\node{p,t}$ in $\CG_Q(r')$ it results that no edge from $\node{z,t-1}$ to $\node{p,t}$ exists either in $\CG_Q(r)$. This results from the definition of a communication graph and the fact that processes other than $s$ fail in the same way in $r$ and $r'$. I.e., no message is sent from $z$ to $p$ neither at $(r',t)$ nor at $(r,t)$.
        \item Case 2b: $z=s$. Clearly, since $s$ failed at $(r',0)$ no message is sent by $s$ to $p$ at $(r',t-1)$. Recall that every path in $\CG_Q(r)$ from $\theta_s$ to $\theta_d$ contains a null message sent by $s$. Assume by way of contradiction that $s$ sends $p$ an actual message at $(r,t-1)$. We hence get a path $\pi_1$ in $\CG_Q(r)$ consisting of local edges from $\theta_s$ to $\node{s,t-1}$ and of an actual message edge from $\node{s,t-1}$ to $\node{p,t}$. Let $\pi_2'$ be a path from $\node{p,t}$ to $\theta_d$ in $\CG_Q(r')$ as guaranteed to exist by the choice of $\node{p,t}$. Clearly, $\pi_2'$ does not contain messages sent by $s$ (since $s$ has failed at time $0$). By \Cref{lem:subgraph-gen} there is a path $\pi_2$ in $\CG_Q(r)$ from $\node{p,t}$ to $\theta_d$ with no messages sent by $s$. Concatenating $\pi_1$ with $\pi_2$, we get a path from $\theta_s$ to $\theta_d$ in $\CG_Q(r)$ that does not contain null messages sent by $s$. Contradicting the assumption.
    \end{itemize}
\end{itemize}
\end{itemize}

To conclude, we showed that there exists a run $r'$ in which both  $r_d(m)=r'_d(m)$ and  $\CG_Q(r')$  contains   no path from $\theta_s$ to $\theta_d$. Hence, $\neg K_d(\theta_s\rightsquigarrow\theta_d)$ at $(r,m)$. So by \Cref{thm:msg-chain}, we have that $\neg K_d(v_s=1)$ at $(r,m)$, despite the fact that $v_s=1$ and $s$ does not fail in $r$. Contradicting the assumption about~$Q$.
\end{proof}
\textbf{Definition of Robust-based Message protocols}
\begin{definition}[Robust-based Message protocols]\label{def:RbM}
 We say that~$Q$ is a Robust-based Message (\RbM) protocol if \begin{itemize}
    \item All actual messages sent in~$Q$ are single-bit messages, and whenever a process~$p$ sends an actual message, it sends a~`0' if $K_{p}(v_s\neq 1 \vee s \text{ is faulty})$ and sends a~`1' otherwise,
    \item for all processes~$p$, each null message sent by $p$ over any channel is a null message in case $\varphi=\neg K_p[(v_s\neq 1)\vee (s \text{ is faulty})]$ and
    \item for every run $r$ and process $p$ it holds that if $p$ sends $q$ an actual message at $(\rnice,t)$, then if the channel $\chan{p,q}$ is not blocked at $(r,t)$, process $p$ also sends $q$ an actual message at $(r,t)$.
\end{itemize}  
\end{definition}

{\bf Proof of \Cref{thm:rob-suff}}
\begin{proof}
Let $\CG$ be a communication graph as described, and let ~$Q$ be an \RbM\  protocol such that $\nG(Q)=\CG$.

Let $r$ be a run of $Q$ in which $v_s=1$ and $s$ does not fail. Let $r'$ be a run in which $v_s=0$ and denote by $\faulty^{r'}$ the set of faulty processes in $r'$. We will show that $r_d(m)\neq r'_d(m)$. 

First, assume there is an actual message chain from $\theta_s$ to $\theta_d$ along which $s$ is the only process sending messages. Then, no matter whether~$s$ fails or not in $r'$, we clearly have that $r_d(m)\neq r'_d(m)$.

Otherwise, the second condition holds. We now reason by cases:

\begin{enumerate}
    \item Case 1: $s\in\faulty^{r'}$: The conditions on the graph imply that there is (at least) one path $\pi$ in $\nG(Q)$ from $\theta_s$ to $\theta_d$ that does not contain messages (neither null nor actual) sent by processes in $\faulty^{r'}\backslash\{s\}$, and in addition, $\pi$ does not contain null messages from $s$. We now show by induction on time that for each node $\node{p,t}\in\pi$, it is the case that $K_p(v_s\neq 1\vee s \text{ is faulty})$ holds at $(r',t)$.
    
    \uline{Base:} $t=0$. In this case, $p=s$. Since $v_s$ appears in $s$'s local state, $K_{s}(v_s\neq 1)$ holds at time 0.

    \uline{Step:}~$t>0$. We consider the nodes~$\node{q,t-1}$ and~$\node{p,t}$ in~$\pi$. By the induction hypothesis~$K_q(v_s\neq 1\vee s \text{ is faulty})$ holds at $t-1$ in $r'$. We now reason by cases according to the class that the edge~$(\node{q,t-1},\node{p,t})$ belongs to  in~$\CG_Q(r)$.
    \begin{itemize}
    \item Case 1a:~$\left(\node{q,t-1},\node{p,t}\right)\in \Eloc$ then $p=q$ and since the fact $(v_s\neq 1\vee s \text{ is faulty})$ is a stable property we have by the induction hypothesis that $K_p(v_s\neq 1\vee s \text{ is faulty})$ holds at $(r',t)$.
    \item Case 1b:~$\left(\node{q,t-1},\node{p,t}\right)\in \Eactual(r)$:
    \begin{itemize}
        \item Case 1b (i): $q$ is active at $(r',t-1)$, it holds by $Q$'s definition and by the induction assumption that $K_p(v_s\neq 1\vee s \text{ is faulty})$ holds at $(r',t)$.
        \item Case 1b (ii): $q$ is not active, then $q=s$ (by the choice of $\pi$). Hence, $K_p(v_s\neq 1\vee s \text{ is faulty})$ holds at $(r',t)$.
\end{itemize}
\item Case 1c: $\left(\node{q,t-1},\node{p,t}\right)\in \Enull(r)$ we have by the choice of~$\pi$ that~$q$ does not fail in $r'$ and by the induction assumption~$K_q(v_s\neq 1\vee s \text{ is faulty})$ holds at $(r',t-1)$. Recall that, by assumption, in~$Q$ process~$q$ can send a null message only in case \\ $\neg K_q[(v_s\neq 1)\vee (s \text{ is faulty})]$. Since, by the inductive assumption on time~$t-1$ this is not the case, $q$ must send~$p$ a message and since the protocol is \RbM, $K_p(v_s\neq 1\vee s \text{ is faulty})$ holds at time~$t$ in~$r'$, as desired.

\end{itemize}
To conclude, we have shown that for every run~$r'$ in which $v_s\ne 1$ it is the case that $r'_d(m)\ne r_d(m)$.
Consequently, $K_{d}(v_s=1)$ holds at $(r,m)$, as claimed. 
    \item Case 2: $s\notin\faulty^{r'}$ then there is a path in $\CG_Q(r)$ that does not contain messages sent by processes in $\faulty^{r'}$.
    As in the previous case, we can show that $r_d(m)\neq r'_d(m)$.
\end{enumerate}
\end{proof}
\textbf{Ordered Response}
\\\\
{\bf Proof of \Cref{thm:OR-necc-1}}
\begin{proof}
\begin{enumerate}

    \item By \Cref{lem:OR-basic}, we have that $(R_Q,r,t_x)\sat K_{i_h}(\factOR)$ holds for each~$x \leq k$. The necessity part of \Cref{thm:OSN tight} implies that there is an~$f$-resilient message block between~$\theta_s$ and~$\theta_x$ for each~$x \leq k$, as claimed.
    
    \item Recall that since $Q$ solves \OR, for all $x>1$ we have by \Cref{lem:OR-basic} that  $K_{i_{x}}(\ba_{x-1})$ must hold at time~$t_{x}$ in $Q$'s nice run~$\rnice$. Assume by way of contradiction that every set of paths~$\Gamma$ from $\theta_{x-1}^+$ to $\theta_x$ that does not contain null messages from $i_{x-1}$ is not an $(f-1)$-resilient message block. This means that for every set of paths $\Gamma$ from $\theta_{x-1}^+$ to $\theta_x$ that does not contain null messages from $i_{x-1}$, there is a set $B'$ of size $|B'|\leq f-1$ such that there is no $\Btagnullfree$ path in $\Gamma$. Let $\Gamma$ be the set of paths from $\theta_{x-1}^+$ to $\theta_x$ such that no $\pi\in\Gamma$ contains null messages from $i_{x-1}$. Let $B'$ be a set of processes as guaranteed to exist according to the contradiction assumption. We denote $B=B'\cup\{i_{x-1}\}$. Let $T_B$ be the set \[T_B~\triangleq~ \{\node{p,t}\in\mathbb{V}:\text{There is no }\Bnullfree \mbox{~path from }\theta_{x-1}^+  \mbox{ to } \node{p,t}\mbox{ in }\nG(Q)\}\] 
     We first show that $\theta_{x}\in T_B$.
    Let $\pi$ be a path from $\theta_{x-1}^+$ to $\theta_x$. If $\pi$ contains a null message from $i_x$ then $\pi$ is not a $\Bnullfree$ path. Otherwise, by the contradiction assumption, there is a process $b'\in B'\subseteq B$ such that $\pi$ contains a null message from $b'$, i.e., $\pi$ is not a $\Bnullfree$ path. Since this holds for every path between $\theta_{x-1}^+$ and $\theta_x$ we get that $\theta_x\in T_B$. Observe that $|B|\leq f$. We claim that there exists a run $r'$ of~$Q$ in which~$a_{x-1}$ is not performed such that~$r'_{i_{x}}(m)=\rnice_{i_{x}}(m)$. This will contradict the fact that~$K_{i_{x}}(\ba_{x-1})$ holds at time~$t_{x}$ in~$\rnice$. We construct $r'$ as follows: 
    The  global states of~$r'$ and~$\rnice$ satisfy that $r'(m)=\rnice(m)$ for all $m\le t_{x-1}$. \big(Notice that an action performed at time~$t_{x-1}$ may affect the global state only from time $t_{x-1}+1$ on.\big) 
    In addition, $i_{x-1}$ fails at time $t_{x-1}$ without executing $a_{x-1}$ but it does send the messages it sends in $\rnice$ and each process~$b\in B$ such that $b\neq i_{x-1}$ crashes in~$r'$ at its critical time~$t_b(\theta_{x-1}^+,\theta_{x})$ wrt.\ $(\nG(Q),B)$ without sending any messages from time $t_b$ on nor executing actions. The rest of the proof proceeds exactly as in \Cref{thm:OSN tight}, showing by induction on time~$t$ that for all nodes  $\node{i,t}\in T_B$, if~$i$ has not crashed by time~$t$ in~$r'$, then~$\rnice_{i}(t)=r'_{i}(t)$. Since $\theta_{x}\in T_B$, we thus obtain that $\rnice_{i_{x}}(t_{x})=r'_{i_{x}}(t_{x})$. Since $a_{x-1}$ is not performed in~$r'$, the claim follows.

\end{enumerate}
\end{proof}

\begin{definition}[Conservative \OR \ protocols]
Let~$Q$ be a deterministic protocol that solves~$\ORi=\angles{\factOR,a_1,\ldots,a_k}$. We say that~$Q$ is conservative for \ORi\ if for every run~$r$ of~$Q$ and all~$x \leq k$ the following is true:
Process~$i_x$ performs~$a_x$ at~$t_x$ only if~$\neg K_{i_x}(\neg\nicerun)$ holds at~$(r,t_x)$.
\end{definition}

In a conservative protocol, if a process~$i_x$ knows at $\theta_x=\node{i_x,t_x}$ that a failure has occurred, then it is not allowed to perform its action. 
Concretely, suppose that process~$i_x$ is prevented from action at $\theta_x=\node{i_x,t_x}$ because it observes there that a process~$b$ has failed at $\rho_b=\node{b,m_b}$.  %$\rho_b\triangleq\node{b,m_b}$ be a node such that the failure of $b$ at time $m_b$ is noticed by process $i_x$ by $t_x$, thus preventing it from acting}. 
Since by \Cref{thm:OR-necc-1}, only $(f-1)$-resilient message blocks are required between two consecutive processes in the \ORi\ instance,  the failure of $f-1$ other processes might disconnect $\theta_{x}$ from a node~$\theta_{h}$, for an index $h>x$ in the instance of \OR\ being solved. Moreover, this might also disconnect $\rho_b$ from $\theta_{h}$. We would then obtain that $i_{h}$ does not distinguish the current run from the nice run, resulting in~$a_{h}$ being performed. This is clearly a violation of \OR. We illustrate this scenario in \Cref{fig:nec-conservative}.
\begin{figure}
    \centering
    \begin{tikzpicture}
  % Node \rho_b with equal sign label and expression
  \node[circle, draw] (rho_b) at (0, 0) {$\rho_b$};
  \node[above] at (rho_b.north) {$=$};
  \node[above, yshift=10pt] at (rho_b.north) {$\langle b, m_b \rangle$};
  \node[circle, draw] (theta_h) at (14, 0) {$\theta_h$};
  
  % Node \rho_q
  \node[circle, draw] (rho_q) at (1.5, 0) {$\rho_q$};
  % Edge from \rho_b to \rho_q
  \draw[->] (rho_b) -- (rho_q);
  % Squiggly Arrow from \rho_q to \theta_x
  \draw[->, bend left=30, decorate, decoration={snake, amplitude=1.0mm, segment length=3mm}] (rho_q) --(5.6,0);
  
  % Node \theta_{x+1}
  \node[circle, draw] (theta_x+1) at (8, 0) {$\theta_{x+1}$};

  % Node \theta_{x}
  \node[circle, draw] (theta_x) at (6, 0) {$\theta_{x}$};
  
  % Three dots to the right of \theta_x
  \node[right=0.2cm of theta_x] (dots) {$\cdots$};
    % Three dots to the right of \theta_x+1
  \node[right=0.5cm of theta_x+1] (dots) {$\cdots$};

% Ellipse-like shape using rectangle with rounded corners
  \node[draw, ellipse, minimum height=3cm, minimum width=1.5cm, rounded corners, right=0.5cm and 0.2cm of dots] (ellipse) {};
    %\draw (10.7,0) ellipse (1cm and 2cm);
  % Text inside the rectangle
  \node[yshift=12pt, xshift=8pt] at (ellipse) {$\langle b, l \rangle$};
  \node[above left=0.1cm and 0.05cm of ellipse] {$B$};
  \node[below =0.1cm of ellipse] {$(f-1)$-resilient};
  \node[below =0.5cm of ellipse] {msg block};
  \node[below=0.1cm of theta_x] {$a_x$ does not occur};
   \tikzset{dashedarrow/.style={->, dashed, decorate}}
  % Dotted edges from the ellipse to the nodes
      \draw[ dashedarrow] (ellipse) -- (12.85,0);
  \draw[dashedarrow] (ellipse) -- (12.85, -1.5);
  \draw[dashedarrow] (ellipse) -- (12.85, 1.5);
  %%%%%%%%%%%%%%%%%%%%%%

  \draw[->, bend left=30, decorate, decoration={snake, amplitude=1.2mm, segment length=3mm}] (13.2,0) -- (13.6,0);
  \draw[->, bend left=30, decorate, decoration={snake, amplitude=1.2mm, segment length=3mm}] (13.2,-1.5) -- (14,-0.4);
  \draw[->, bend left=30, decorate, decoration={snake, amplitude=1.2mm, segment length=3mm}] (13.2,1.5) -- (14,0.4);
  
  % Red crosses next to the dashed edges
  \draw[red, line width=1.5pt] (12,-1.1) ++(-0.12,0.12) -- ++(0.25,-0.25);
  \draw[red, line width=1.5pt] (12,-1.1) ++(-0.12,-0.12) -- ++(0.25,0.25);

  \draw[red, line width=1.5pt] (12,-0.3) ++(-0.12,0.12) -- ++(0.25,-0.25);
  \draw[red, line width=1.5pt] (12,-0.3) ++(-0.12,-0.12) -- ++(0.25,0.25);

  \draw[red, line width=1.5pt] (0.5,-0.2) ++(-0.12,0.12) -- ++(0.25,-0.25);
  \draw[red, line width=1.5pt] (0.5,-0.2) ++(-0.12,-0.12) -- ++(0.25,0.25);

% \node[circle, fill, inner sep=1.5pt, below right=0.2cm and 0.2cm of ellipse] {};
\node[circle, draw] at (13, 0) {};
\node[circle, draw] at (13, -1.5) {};

\node[circle, draw]  at (13, 1.5) {};
\end{tikzpicture}
    \caption{The problematic scenario that \Cref{thm:OR-necc-2} solves. Squiggly arrows represent any kind of message chain or sets of message chains. Red crosses represent process failures. As in previous figures, dashed arrows represent null messages and full arrows represent real messages.
    If the depicted scenario occurs, then $b$'s failure at $\rho_b$ can cause $i_x$ not to perform $a_x$. The protocol must therefore provide an $(f-1)$-resilient message block to $\theta_h$ from one of $\theta_x$ or $\node{b,m_b+1}$.}
    \label{fig:nec-conservative}
\end{figure}
We can show that in order to prevent such a scenario there must be a $\Bnullfree$ path from $\theta_x$ or from $b$ after its potential failing node~$\rho_b$, to~$\theta_h$. This way, if~$b$ fails at $\rho_b$ and prevents~$i_x$ from acting, then~$i_h$ will distinguish the current run from the nice run at~$\theta_h$. Acting conservatively, $i_h$ will also refrain from acting, and thus avoid causing a violation of the \OR\ specification. Formally:

\begin{theorem}\label{thm:OR-necc-2}

Let~$Q$ be a {\em conservative} protocol
solving $\ORi=\angles{\factOR,a_1,\ldots,a_k}$. 

For all nodes $\rho_b=\node{b,m_b}$, indices $x<h\le k$ and sets $B\subseteq\Proc$, 
if 
\begin{enumerate}
    \item there is a path $\pi$ from $\rho_b$ to~$\theta_x$ in $\nG(Q)$ that starts with an edge $(\rho_b,\rho_q)\in \Eactual$ and contains no edges corresponding to null message by~$b$, and in addition
    \item  $b\in B$, $|B|\le f$ and there is no $\Bnullfree$ path from $\theta_x$ to $\theta_h$ in $\CG$,
\end{enumerate}

then there is a $\Bnullfree$ path from~$\node{b,m_b+1}$ to~$\theta_h$ in $\nG(Q)$.

\end{theorem}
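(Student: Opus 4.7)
The plan is to argue by contradiction. Suppose $\nG(Q)$ contains no $\Bnullfree$ path from $\node{b,m_b+1}$ to $\theta_h$. I will exhibit a run $r^{*}\in R_Q$ in which $i_x$ fails to perform $a_x$ at $t_x$ while $i_h$ does perform $a_h$ at $t_h$. Since \OR\ requires $a_x$ to have occurred before $a_h$, this supplies the contradiction.

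In $r^{*}$ I take the initial global state equal to that of $\rnice$ (so $v_s=1$), let $b$ crash at time $m_b$ without sending any message in round $m_b+1$, and let every remaining process $b'\in B\setminus\{b\}$ crash at its \emph{critical time} with respect to the pair of source nodes $\{\theta_x,\node{b,m_b+1}\}$ and the destination $\theta_h$, in the same spirit as the construction in the proof of \Cref{thm:f-block-nec}. The bound $|B|\le f$ makes this a legal failure pattern.

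The first half of the argument shows $r^{*}_{i_h}(t_h)=\rnice_{i_h}(t_h)$, from which determinism forces $i_h$ to perform $a_h$ at $t_h$ in $r^{*}$. I would define
\[
T \,\triangleq\, \bigl\{\theta\in\mathbb{V} : \nG(Q) \text{ has no } \Bnullfree \text{ path from } \theta_x \text{ to } \theta \text{ and none from } \node{b,m_b+1} \text{ to } \theta\bigr\},
\]
use hypothesis~2 of the theorem together with the contradiction hypothesis to place $\theta_h\in T$, and then run a time induction analogous to that of \Cref{thm:f-block-nec} to deduce $r^{*}_i(t)=\rnice_i(t)$ for every $\node{i,t}\in T$ at which $i$ is still active in $r^{*}$.

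The second half shows $i_x$ does not perform $a_x$ at $t_x$ in $r^{*}$. By conservativeness of $Q$, it suffices to prove $K_{i_x}(\neg\nicerun)$ at $(r^{*},t_x)$. Here I induct along the path $\pi$: the first interior node $\rho_q$ witnesses $b$'s crash immediately, because $b$ sends $\rho_q$ an actual message in $\rnice$ but nothing in $r^{*}$. For each successive edge of $\pi$ I use the hypothesis that it is not a null edge from $b$ to argue that the observed mismatch with $\rnice$ is carried forward to the next node along the path, and eventually reaches $i_x$ by time $t_x$.

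The main obstacle will be this propagation step along $\pi$: unlike the induction in \Cref{thm:f-block-nec}, which only preserves sameness with $\rnice$, here one must propagate a genuine difference at every node of $\pi$. This will require choosing the critical-time failures of $B\setminus\{b\}$ so they do not interact with intermediate nodes of $\pi$ in a way that accidentally restores agreement with $\rnice$, and handling separately the cases where an edge of $\pi$ corresponds to an actual versus a null message in $\rnice$. The hypothesis that $\pi$ carries no null edge from $b$ is precisely what prevents $b$'s crash from being absorbed into a prescribed silence along the way to $\theta_x$.
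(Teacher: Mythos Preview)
Your overall strategy matches the paper's: construct a run in which $b$ crashes at time $m_b$ so that, via the path $\pi$ and conservativeness, $i_x$ refrains from performing $a_x$, while the remaining members of $B$ crash at carefully chosen critical times so that $i_h$ cannot distinguish the run from $\rnice$; the paper phrases the resulting contradiction via $K_{i_h}(\ba_x)$ from \Cref{lem:OR-basic}, which is equivalent to your direct violation of \OR. Your two-source definition of $T$ and the appeal to the induction of \Cref{thm:f-block-nec} are also exactly what the paper does.

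There is, however, a genuine gap in your construction of $r^*$, and it is in the half you regard as routine. You have $b$ crash at $m_b$ sending \emph{no} message whatsoever in round $m_b{+}1$. But then every neighbour $p$ to whom $b$ sends an \emph{actual} message at time $m_b$ in $\rnice$ will observe a discrepancy, and such a node $\node{p,m_b{+}1}$ typically lies in your set $T$: both of your source nodes $\theta_x$ and $\node{b,m_b{+}1}$ are at time at least $m_b{+}1$, so neither admits any path to $\node{p,m_b{+}1}$ when $p\notin\{b,i_x\}$. Your time induction ``$r^{*}_i(t)=\rnice_i(t)$ for every $\node{i,t}\in T$'' therefore already fails at $t=m_b{+}1$, and the discrepancy at $\node{p,m_b{+}1}$ may well propagate to $\theta_h$, destroying the conclusion $r^{*}_{i_h}(t_h)=\rnice_{i_h}(t_h)$. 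The paper avoids this by a more delicate crash: $b$ suppresses at time $m_b$ \emph{only} those outgoing messages that lie on paths reaching $\theta_x$, and delivers the rest. That selective blocking is exactly what is needed so that nodes in $T_B$ receive from $b$ at time $m_b$ what they receive in $\rnice$, while still ensuring that the failure is detected along $\pi$ and hence by $i_x$.

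You are right that propagating the difference along $\pi$ in the second half deserves care (the paper in fact dispatches it with a one-line ``argument analogous to\ldots''), but the obstacle above needs to be fixed first, since without it the first half simply does not go through.
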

In a precise sense, combining the  conditions in this theorem with those of \Cref{thm:OR-necc-1} we obtain a set of conditions that is not only necessary for conservative protocols~$Q$ (as already proved), but also sufficient. Indeed, as we now show, there exist protocols solving Ordered Response that satisfy precisely these conditions.
\begin{proof}
    Assume, by way of contradiction, that assumptions (1) and (2) hold but there is no $\Bnullfree$ path from~$\node{b,m_b+1}$ to~$\theta_h$ in $\nG(Q)$.
        Let $x\le h<k$, let $B$ a set of processes such that $|B|\leq f$ and let~$b\in B$ such that there is a path from $\rho_b=\node{b,m_b}$ to $\theta_x$ that starts with an edge $\node{\rho_b,\rho_q)}\in \Eactual$ that does not contain null messages sent by $b$ in $\nG(Q)$. In addition there is no $\Bnullfree$ path neither from~$\theta_x$ to~$\theta_h$ nor from~$\node{b,m_b+1}$ to~$\theta_h$.
    The idea is to construct a run $r'$ in which $i_{x}$ knows the current run is not nice and since $Q$ is a conservative protocol, it follows that $a_{x}$ does not occur. In addition, the processes of $B$ fail in a way that $i_{h+1}$ does not differentiate $r'$ from $\rnice$; contradicting the fact that $K_{i_h}(\ba_{x})$ holds at $(\rnice,t_{h})$.
    We construct the run $r'$ to be  identical to $\rnice$ up to and including time~$m_b$ at which point~$b$ fails without sending any messages along the paths reaching $\theta_{x}$ ($b$ does send the other messages it is supposed to send at time~$m_b$).
    In addition, in $r'$ each process $b'\in B$ such that $b'\neq b$ fails at its critical time $t_{b'}=t_{b'}(\theta_x,\theta_h)$ wrt.\ $(\nG(Q),B)$ without sending any messages. Let
    \begin{align*}
        T_B=\{\node{p,t}\in\mathbb{V}:\text{There is neither a }\Bnullfree \mbox{~path from }\theta_x  \mbox{ to } \node{p,t}\mbox{ in }\nG(Q) \\ \mbox{ nor a } \Bnullfree \mbox{~path from }\node{b,m_b+1} \mbox{ to } \node{p,t}\} 
    \end{align*}
    An argument analogous to that in \Cref{thm:OSN tight} now shows that $K_{i_{x}}(\neg \nicerun)$ holds at time~$t_{x}$ and then $i_{x}$ does not act.
    Observe that by the assumption, $\theta_h\in T_B$. Moreover, the same argument as in the proof of \Cref{thm:OSN tight} now shows by induction on~$t$ that for each node $\node{i,t}\in T_B$, it holds that $r'_{i}(t)=\rnice_{i}(t)$. In particular, since  $\theta_h\in T_B$ we can conclude that $\rnice_{i_{h}}(t_{h})=r'_{i_{h}}(t_{h})$, contradicting the fact that $K_{i_{h}}(\ba_{x})$ holds at $(\rnice,t_{h})$ as required by \Cref{lem:OR-basic}. 
\end{proof}
\begin{theorem}[Sufficient conditions for \OR]\label{thm:or-suff}
    The conditions stated in \Cref{thm:OR-necc-1} and \Cref{thm:OR-necc-2} are sufficient for solving an instance $\ORi=\angles{\factOR,a_1,a_2,\ldots,a_k}$ of the Ordered Response problem. 
   Namely, suppose that 
     for some sequence of times $t_1\le t_2\le\cdots\le t_k$  a communication graph $\CG$ contains, for every $h\leq k$, 
    \begin{enumerate}
        \item\label{it:OR-suff-1} $f$-resilient messages blocks between $\theta_s$ and $\theta_h$, and
        \item\label{it:OR-suff-2} an $(f-1)$-resilient message block between $\theta_{h}^+$ and $\theta_{h+1}$ that does not contain null messages from $i_{h}$ for $h<k$. Moreover, assume that 
        \item \label{it:OR-suff-3} for all nodes $\rho_b=\node{b,m_b}$, indices $x<h\le k$ and sets $B\subseteq\Proc$, if 
        \begin{enumerate}
            \item there is a path $\pi$ from $\rho_b$ to~$\theta_x$ in $\CG$ that starts with an edge $(\rho_b,\rho_q)\in \Eactual$ and contains no edges corresponding to null messages by~$b$, and 
            \item it holds in addition that $b\in B$, $|B|\le f$ and there is no $\Bnullfree$ path from $\theta_x$ to $\theta_h$ in $\CG$,  
        \end{enumerate} 
        then there is a $\Bnullfree$ path from~$\node{b,m_b+1}$ to~$\theta_h$ in $\CG$.
    \end{enumerate}
    Then there exists a protocol $Q$ that solves $\ORi$ with respect to  times $t_1\le t_2\le\cdots\le t_k$ such that $\nG(Q)=\CG$.
\end{theorem}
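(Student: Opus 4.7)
The plan is to construct a conservative Nice-based Message protocol $Q$ with $\nG(Q)=\CG$ in which, at time $t_h$, process $i_h$ performs $a_h$ precisely when $\neg K_{i_h}(\neg\nicerun)$ holds. Liveness in $\rnice$ is immediate, since no process ever distinguishes the nice run from itself and therefore every $i_h$ acts at $t_h$. The heart of the argument is safety, which splits into the two obligations of \Cref{def:OR-input}: first, that $v_s=1$ whenever any $a_h$ fires, and second, that $a_h$ fires only after $a_1,\ldots,a_{h-1}$ have fired (the ordering being automatic from the given $t_1\le\cdots\le t_k$).

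The ``$v_s=1$'' part reuses verbatim the sufficiency direction of \Cref{thm:OSN tight}: if $v_s=0$ in a run $r$ with at most $f$ failures, the $f$-resilient block from $\theta_s$ to $\theta_h$ guaranteed by condition~1 of the hypothesis contains a $\Bnullfree$ path along which $K(\neg\nicerun)$ propagates inductively in any \NbM\ protocol, contradicting $\neg K_{i_h}(\neg\nicerun)$. For the ordering, I would induct on $h$ and prove the contrapositive ``if $i_{h-1}$ does not act in $r$, then neither does $i_h$''; combined with the inductive hypothesis this yields the full statement.

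For the contrapositive I would isolate two sub-cases. Sub-case~(I): $i_{h-1}$ has failed by time $t_{h-1}$. Taking $B$ to be the failure set of $r$, so that $i_{h-1}\in B$ and $|B|\le f$, condition~2 of the hypothesis supplies a $\Bnullfree$ path from $\theta_{h-1}^+$ to $\theta_h$ whose first non-local edge is forced to be an actual-message edge by $i_{h-1}$, which is missing in $r$. The recipient therefore detects $\neg\nicerun$, and a direct generalization of the propagation argument from the sufficiency proof of \Cref{thm:OSN tight} carries this knowledge to $\theta_h$. Sub-case~(II): $i_{h-1}$ is active at $t_{h-1}$ and $K_{i_{h-1}}(\neg\nicerun)$ holds. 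Since $v_s=1$, I would establish a back-tracing lemma showing that this knowledge originates from the crash of some $b\in B$ at a node $\rho_b=\node{b,m_b}$, producing a path in $\CG$ from $\rho_b$ to $\theta_{h-1}$ that starts with an actual-message edge and contains no null-message edges by $b$, which is exactly hypothesis~(3a) of the theorem. Condition~3 then supplies either a $\Bnullfree$ path from $\theta_{h-1}$ to $\theta_h$ (along which the propagation lemma carries $\neg\nicerun$ from $i_{h-1}$ to $i_h$), or a $\Bnullfree$ path from $\node{b,m_b+1}$ to $\theta_h$ whose first non-local edge is an actual edge by the dead process $b$, so its recipient detects $\neg\nicerun$ and propagation along the remainder completes the argument.

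The main obstacle I foresee is the back-tracing lemma in sub-case~(II). One must track, purely from \NbM\ semantics, how $\neg\nicerun$-knowledge at $\theta_{h-1}$ was originally produced by a failed sender, and verify that the resulting chain lies in $\nG(Q)=\CG$, that it starts with an actual-message edge into the first recipient to detect a missing message, and that it contains no null-message edge whose sender is $b$. The last point should hold because after failing at $m_b$ the process $b$ sends no further messages and therefore cannot appear as an informational sender downstream, but the argument must still rule out all the ways that $b$ could re-appear along the chain via local edges followed by later outgoing edges. A secondary and more routine obligation is the propagation lemma itself, which is a straightforward generalization of the inductive step in the sufficiency proof of \Cref{thm:OSN tight} from the fixed source $\theta_s$ to an arbitrary starting node at which $\neg\nicerun$ is already known.
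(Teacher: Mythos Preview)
Your proposal is correct and follows essentially the same route as the paper: build a conservative \NbM\ protocol with $\nG(Q)=\CG$ in which $i_h$ acts at $t_h$ iff $\neg K_{i_h}(\neg\nicerun)$, handle the ``$v_s=1$'' obligation via condition~1 and the sufficiency argument of \Cref{thm:OSN tight}, and handle the ordering obligation by splitting on whether the non-acting predecessor crashed (use condition~2) or is alive and knows $\neg\nicerun$ (back-trace to a crashed $b$, invoke condition~3, and propagate along the resulting $\Bnullfree$ path). The paper's proof differs only cosmetically---it inserts an extra sub-case split on whether some $\Bnullfree$ path from $\theta_h^+$ survives and back-traces from the \emph{minimal} index $j$ with $a_j$ unperformed rather than from the immediate predecessor---and, like you, it states the back-tracing step as an assertion rather than proving it in detail.
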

    
    Taken together, \Cref{thm:OR-necc-1,thm:OR-necc-2,thm:or-suff} provide a  characterization of the communication patterns that can solve Ordered Response using null messages. This characterization is tight for communication patterns of {\em conservative} protocols that solve \OR.
\\\\
{\bf Proof of \Cref{thm:or-suff}}

\begin{proof} 
Let $\ORi=\angles{\factOR,a_1,\ldots,a_k}$ be an instance of an \OR \ problem. We define~$Q$ to be an \NbM\ protocol such that $\nG(Q)=\CG$ for some sequence of times $t_1\le t_2\le\cdots\le t_k$. In addition in every run $r$ such that $\neg K_{i_h}( \neg\nicerun)$ holds at $(r,t_h)$, $Q$ instructs process $i_h$ to perform $a_h$ at $t_h$.

Let~$h\leq k$.
We prove that 
\begin{itemize}
    \item $K_{i_h}(\factOR)$ holds at $(\rnice,t_h)$,
    \item $K_{i_{h+1}}\;\!\ba_{h}$ holds at $(\rnice,t_{h+1})$ for every $h<k$  
\end{itemize}

Let $r'$ be a run in which $v_s\neq 1$ and let $h\leq k$. \Cref{it:OR-suff-1} of \Cref{thm:or-suff} states  that there is an~$f$-resilient message block from~$\theta_s$ to~$\theta_h$. Hence, by \Cref{thm:OSN tight} we have that~$r'_{i_{h}}(t_{h})\neq \rnice_{i_{h}}(t_{h})$. Thus,~$K_{i_h}(v_s=1)$ holds at~$(\rnice,t_h)$.
Now, let $r'$ be a run in which $v_s=1$ and in which $i_{h}$ does not execute~$a_h$ for some $h<k$. We prove that $r'_{i_{h+1}}(t_{h+1})\neq \rnice_{i_{h+1}}(t_{h+1})$.
Since~$v_s=1$ there are two possibilities:
\begin{enumerate}
    \item $i_h$ fails at a time $t\leq t_h$ without performing $a_h$. \Cref{it:OR-suff-2} states that there is an $(f-1)$-resilient message block from $\theta_{h}^+$ to $\theta_{h+1}$ that does not contain null messages from $i_h$. It implies that there is a path in $\nG(Q)$ in which no silent process fails in $r'$. Thus, we can show as in \Cref{thm:OSN tight} that $\rnice_{i_{h+1}}(t_{h+1})\neq r'_{i_{h+1}}(t_{h+1})$.
    \item $i_{h}$ does not fail up to and including time~$t_h$. Since $a_h$ is not performed, it follows by $Q$'s definition that $K_{i_{h}}(\neg \nicerun)$ holds at time $t_h$ in $r'$.
    We separate into 2 cases:
    \begin{enumerate}
        \item There is a path~$\pi$ in~$\nG(Q)$ from~$\theta_h^+$ to~$\theta_{h+1}$ such that in~$r'$ no process that sends a null message in~$\pi$ fails. Then, as in \Cref{thm:OSN tight} we can show by induction on time that~$r'_{i_{h+1}}(t_{h+1})\neq\rnice_{i_{h+1}}(t_{h+1})$.%~$i_{h+1}$ knows~$i_h$ did not act.
        \item In every path of $\nG(Q)$ from~$\theta_h^+$ to~$\theta_{h+1}$ there is a process that sends a null message that fails. Then it means that by~$t_{h+1}$ there have been~$f$ failures (this results from the fact that for every set $B$ of size $|B|\leq f-1$ there is a $\Bnullfree$ path between $\theta_h^+$ and~$\theta_{h+1}$). Denote by~$B=\{b_1,b_2,\ldots,b_f\}$ the processes that have failed by time~$t_{h+1}$.
        Recall that the fact that~$i_h$ does not act at time~$t_h$ implies that $K_{i_{h}}(\neg \nicerun)$ holds at time $t_h$ in $r'$.
        We look at the smallest~$1\leq j \leq h$ for which $a_j$ has not been performed at~$t_j$ in~$r'$ (In particular it may be that~$j=h$). Recall that $v_s=1$ in $r'$. By the choice of~$j$ and the protocol, it follows that there is a path in $\nG(Q)$ starting by an edge $(\rho_b,\rho_q)\in\Eactual$ (denote $\rho_b=\node{b,m_b}$)to~$\theta_j$ in $\nG(Q)$ such that $b$ fails by time $m_b$ (Otherwise the failure would not be detected). By condition 3, we have that there is at least a $\Bnullfree$ path~$\pi$ in~$\nG(Q)$ from~$\theta_j^+$ to~$\theta_{h+1}$ or a $\Bnullfree$ path~$\pi$ in~$\nG(Q)$ from~$\node{b,m_b+1}$ to~$\theta_{h+1}$. Then, as in \Cref{thm:OSN tight} we can show by induction on time that~$r'_{i_{h+1}}(t_{h+1})\neq\rnice_{i_{h+1}}(t_{h+1})$.
    \end{enumerate}
\end{enumerate}
We hence have shown that both 
\begin{itemize}
    \item $K_{i_h}(\factOR)$ holds at $(\rnice,t_h)$,
    \item $K_{i_{h+1}}\;\!\ba_{h}$ holds at $(\rnice,t_{h+1})$ for every $h<k$. 
\end{itemize}

Clearly, in the nice run $\rnice$ of~$Q$, every process $i_h$ is active at time $t_h$ and $\neg K_{i_h}( \neg\runfact_{nice})$ holds at $(\rnice,t_h)$. Recall that by the choice of $Q$, the protocol $Q$ instructs process $i_h$ to perform $a_h$ at $t_h$ in every run $r$ of~$Q$ such that $\neg K_{i_h}( \neg\nicerun)$ holds at $(r,t_h)$. Consequently, in the nice run of $Q$ the actions are performed and there is no violation of the \OR\ requirements in any run $r$ of~$Q$. It follows that~$Q$ solves the Ordered Response problem, as claimed.
\end{proof}
\end{document}